\theoremstyle{definition}
\newtheorem{defn}{\protect\definitionname}
\theoremstyle{plain}
\newtheorem{lem}{\protect\lemmaname}
  \newenvironment{proof}[1][\proofname]{\par
    \normalfont\topsep6\p@\@plus6\p@\relax
    \trivlist
    \itemindent\parindent
    \item[\hskip\labelsep
          \scshape
      #1]\ignorespaces
  }{%
    \endtrivlist\@endpefalse
  }
  \providecommand{\proofname}{Proof}
\theoremstyle{definition}
\newtheorem{example}{\protect\examplename}
\theoremstyle{remark}
\newtheorem{rem}{\protect\remarkname}
\theoremstyle{plain}
\newtheorem{thm}{\protect\theoremname}
\theoremstyle{plain}
\newtheorem{cor}{\protect\corollaryname}
\journal{Automatica}
\tikzset{
  every overlay node/.style={
    draw=white,anchor=north west,
  },
}
\providecommand{\corollaryname}{Corollary}
\providecommand{\definitionname}{Definition}
\providecommand{\examplename}{Example}
\providecommand{\lemmaname}{Lemma}
\providecommand{\remarkname}{Remark}
\providecommand{\theoremname}{Theorem}
\begin{document}

\begin{frontmatter}{}

\title{Characterizing Bipartite Consensus on Signed Matrix-Weighted \\
Networks via Balancing Set}

\author{Chongzhi~Wang}

\author{Lulu~Pan}

\author{Haibin~Shao \corref{cor1}}

\author{Dewei~Li}

\author{Yugeng~Xi}

\fntext[fn2]{This work is supported by the National Natural Science Foundation
of China (Grant No. 61973214, 61590924, 61963030) and Natural Science
Foundation of Shanghai (Grant No. 19ZR1476200). }

\cortext[cor1]{Corresponding author.}

\address{Department of Automation, Shanghai Jiao Tong University and Key Laboratory
of System Control and Information Processing, \\Ministry of Education,
Shanghai, China 200240}
\begin{abstract}
In contrast with the scalar-weighted networks, where bipartite consensus
can be achieved if and only if the underlying signed network is structurally
balanced, the structural balance property is no longer a graph-theoretic
equivalence to the bipartite consensus in the case of signed matrix-weighted
networks. To re-establish the relationship between the network structure
and the bipartite consensus solution, the non-trivial balancing set
is introduced which is a set of edges whose sign negation can transform
a structurally imbalanced network into a structurally balanced one
and the weight matrices associated with edges in this set have a non-trivial
intersection of null spaces. We show that necessary and/or sufficient
conditions for bipartite consensus on matrix-weighted networks can
be characterized by the uniqueness of the non-trivial balancing set,
while the contribution of the associated non-trivial intersection
of null spaces to the steady-state of the matrix-weighted network
is examined. Moreover, for matrix-weighted networks with a positive-negative
spanning tree, necessary and sufficient condition for bipartite consensus
using the non-trivial balancing set is obtained. Simulation examples
are provided to demonstrate the theoretical results.
\end{abstract}
\begin{keyword}
Matrix-weighted networks \sep bipartite consensus \sep balancing
set \sep structural balance \sep null space
\end{keyword}

\end{frontmatter}{}

\section{Introduction}

The consensus problem of multi-agent networks has been extensively
studied in the last two decades (\citet{jadbabaie2003coordination,olfati2004consensus,ren2005second,mesbahi2010graph}).
The analysis of multi-agent networks, from a graph-theoretic perspective,
emerges from the well-established algebraic graph theory \citet{godsil2001algebraic}.
In \citet{olfati2004consensus,ren2005consensus,jadbabaie2003coordination},
it was shown that a systematical unity is guaranteed under the consensus
protocol whenever the communication graph with positive scalar-valued
weights is (strongly) connected. An alteration was further made by
\citet{altafini2012consensus} on the protocol which allows the scalar-valued
weights to be either positive or negative while guaranteeing asymptotic
stability of the network. 

Recently, the consensus protocol has been examined in a broader context
which in turn calls for the possibility of matrices as edge weights.
A common practice is to adopt real symmetric matrices that are either
positive (semi-)definite or negative (semi-)definite as edge weights.
In fact, the involvement of matrix-valued weights arises naturally
when characterizing the inter-dimensional communication amongst multi-dimensional
agents, the scenarios being, for instance, graph effective resistance
and its applications in distributed control and estimation \citet{tuna2017observability,barooah2008estimation},
opinion dynamics on multiple interdependent topics \citet{friedkin2016network,ye2020continuous},
bearing-based formation control \citet{zhao2015translational}, coupled
oscillators dynamics \citet{tuna2019synchronization}, and consensus
and synchronization problems \citet{tuna2016synchronization,TRINH2018415,Pan2019}.
The weight matrices inflict drastic change on the graph Laplacian
thus urging many of the old topics to be re-investigated like the
controllability of matrix-weighted networks \citet{pan2020controllability}
and its $H_{2}$ performance \citet{foight2020performance,de2020h2}. 

In retrospect of the consensus algorithm on matrix-weighted networks,
\citet{TRINH2018415} examined multi-agent networks that involve positive
(semi-)definite matrices as edge weights. In this setting they have,
among other things, proposed the positive spanning tree as a sufficient
graph condition for the network consensus. Antagonistic interaction
represented by negative (semi-)definite matrices was soon extended
to both undirected and directed networks \citet{Pan2019,pan2020sciencechina}.
It was shown that for the matrix-weighted network with a positive-negative
spanning tree, it being structurally balanced is equivalent to admitting
a bipartite consensus solution. Nevertheless, a missing correspondence
between the network structure and its steady-state was pointed out
in \citet{Pan2019}. It was stated that structural balance is not
sufficient in admitting the bipartite consensus in the presence of
positive/negative semi-definite weight matrices; while \citet{su2019bipartite}
affirmed that the structural balance property is not a necessary condition
either. Up to now, most of the research is done on sufficient graph-theoretic
conditions for the bipartite consensus by ruling out the ramification
of semi-definite weight matrices on the network. To the best of our
knowledge, the bipartite consensus of general matrix-weighted networks
is still deficient in any consistent graph-theoretic interpretation.

In this paper, we propose the non-trivial balancing set (NBS) as a
tentative step to re-establish the relationship between the network
structure and the bipartite consensus of matrix-weighted networks.
The NBS defines a set of edges with non-trivial intersecting null
spaces and, by their negation, restore the potential structural balance
of the network. With the non-trivial balancing set, we would first
study the matrix-weighted network in general, with or without structural
balance or positive-negative spanning trees. The uniqueness of the
non-trivial balancing set turns out to be a necessary yet insufficient
condition for the bipartite consensus in this case. Inflicting a stronger
precondition, the uniqueness of the NBS becomes both necessary and
sufficient to achieve bipartite consensus for networks with positive-negative
spanning trees. We extend from this well-defined case and discuss
the sufficient condition to have the agents converge bipartitely in
a more general setting.

The remainder of this paper is arranged as follows. Basic notations
and definitions of graph theory and matrix theory are introduced in
\S 2. In \S 3, we formulate the dynamical protocol and provide a
simulation example to motivate our work, before we formally introduce
the definition of the non-trivial balancing set in \S 4. \S 5 incorporates
the main results in terms of the uniqueness of the non-trivial balancing
set and its contribution to the bipartite consensus. \S 6 presents
simulation results on the constructed graph that support the derived
theories. Some concluding remarks are given in \S 7.

\section{Preliminaries}

\subsection{Notation }

Let $\mathbb{R}$, $\mathbb{N}$ and $\mathbb{Z}_{+}$ be the set
of real numbers, natural numbers and positive integers, respectively.
For $n\in\mathbb{Z}_{+}$, denote $\underline{n}=\left\{ 1,2\text{,}\cdots,n\right\} $.
We note specifically that for sets, the notation $|\cdot|$ is used
for cardinality. The symmetric matrix $Q\in\mathbb{R}^{n\times n}$
is positive(negative) definite if $\boldsymbol{z}^{T}Q\boldsymbol{z}>0$
($\boldsymbol{z}^{T}Q\boldsymbol{z}<0$) for all $\boldsymbol{z}\in\mathbb{\mathbb{R}}^{n}$
and $\boldsymbol{z\not}=0$, in which case it is denoted by $Q\succ0$
($Q\prec0$). While it is positive (negative) semi-definite, denoted
by $Q\succeq0$ ($Q\preceq0$), if $\boldsymbol{z}^{T}Q\boldsymbol{z}\ge0$
($\boldsymbol{z}^{T}Q\boldsymbol{z}\le0$) for all $\boldsymbol{z}\in\mathbb{\mathbb{R}}^{n}$
and $\boldsymbol{z\not}=0$. We adopt an extra matrix-valued sign
function ${\bf sgn}(\cdot):\mathbb{R}^{n\times n}\mapsto\left\{ 0,-1,1\right\} $
to express this positive/negative (semi-)definiteness of a symmetric
matrix $Q$, it is defined such that $\text{{\bf sgn}}(Q)=1$ if $Q\succeq0$
and $Q\neq0$ or $Q\succ0$, $\text{{\bf sgn}}(Q)=-1$ if $Q\preceq0$
and $Q\neq0$ or $Q\prec0$, and $\text{{\bf sgn}}(Q)=0$ if $Q=0$.
We shall employ $|\cdot|$ for such symmetric matrices to denote the
operation ${\bf sgn}(Q)\cdot Q$, namely, $|Q|=Q$ if $Q\succ0$ or
$Q\succeq0$, $|Q|=-Q$ if $Q\prec0$ or $Q\preceq0$, and $|Q|=Q=-Q$
when $Q=0$. Denote the null space of a matrix $Q\in\mathbb{R}^{n\times n}$
as $\text{{\bf null}}(Q)=\left\{ \boldsymbol{z}\in\mathbb{R}^{n}|Q\boldsymbol{z}=0\right\} $.
The notation $B={\bf blk}\{\cdot\}$ is used for the block matrix
$B$ that is partitioned into the blocks in $\{\cdot\}$, and there
is further ${\bf blkdiag}\{\cdot\}$ to denote when all the non-zero
blocks in $\{\cdot\}$ are on the diagonal of $B$; while ${\bf blk}_{ij}(B)$
refers to the intersection of the $i$th row block and the $j$th
column block of $B$.

\subsection{Graph Theory }

A multi-agent network can be characterized by a graph $\mathcal{G}$
with node set $\mathcal{V}=\underline{n}$ and edge set $\mathcal{E}\subseteq\mathcal{V}\times\mathcal{V}$,
for which $e_{ij}=(i,j)\in\mathcal{E}$ if there is a connection between
node $i$ and $j$ for $\forall i,j\in\mathcal{V}$. Define the matrix-weighted
graph (network) $\mathcal{G}$ as a triplet $\mathcal{G}=(\mathcal{V},\mathcal{E},\mathcal{A})$,
where $\mathcal{A}$ is the set of all weight matrices. A subgraph
of $\mathcal{G}$ is a graph $\overline{\mathcal{G}}=(\overline{\mathcal{V}},\overline{\mathcal{E}},\overline{\mathcal{A}})$
such that $\overline{\mathcal{V}}\subseteq\mathcal{V},\overline{\mathcal{E}}\subseteq\mathcal{E},\overline{\mathcal{A}}\subseteq\mathcal{A}$.
Let $\mathcal{W}(e_{ij})$ denote the weight matrix assigned to edge
$e_{ij}$ such that $\mathcal{W}(e_{ij})=A_{ij}\in\mathcal{A}\subset\mathbb{R}^{d\times d}$.
We shall refer to a matrix-weighted network $\mathcal{G}=(\mathcal{V},\mathcal{E},\mathcal{A})$
with $n$ nodes and $d\times d$ weight matrices as $(n,d)-$matrix-weighted
network. Reversely, $\mathcal{W}^{-1}(A_{ij})=e_{ij}$ maps from the
weight matrix to the corresponding edge. In this paper, we use symmetric
matrices for all edges in $\mathcal{G}$, which are $A_{ij}\in\mathbb{R}^{d\times d}$
such that $|A_{ij}|\succeq0$ or $|A_{ij}|\succ0$ if $(i,j)\in\mathcal{E}$
and $A_{ij}=0$ otherwise for all $i,j\in\mathcal{V}$. Thereby the
adjacency matrix for a matrix-weighted graph $A=[A_{ij}]\in\mathbb{R}^{dn\times dn}$
is a block matrix such that the block on the $i$-th row and the $j$-th
column is $A_{ij}$. We say an edge $(i,j)\in\mathcal{E}$ is positive(negative)
definite or positive(negative) semi-definite if the corresponding
weight matrix $A_{ij}$ is positive(negative) definite or positive(negative)
semi-definite. Since the graphs considered are simple and undirected,
we assume that $A_{ij}=A_{ji}$ for all $i\not\not=j\in\mathcal{V}$
and $A_{ii}=0$ for all $i\in\mathcal{V}$. Let $\mathcal{N}_{i}=\left\{ j\in\mathcal{V}\,|\,(i,j)\in\mathcal{E}\right\} $
be the neighbor set of an agent $i\in\mathcal{V}$. We use $C=\text{{\bf blkdiag}}\left\{ C_{1},C_{2},\cdots,C_{n}\right\} \in\mathbb{R}^{dn}$
to represent the matrix-weighted degree matrix of a graph where $C_{i}=\sum_{j\in\mathcal{N}_{i}}|A_{ij}|\in\mathbb{R}^{d\times d}$.
The matrix-valued Laplacian matrix of a matrix-weighted graph is defined
as $L(\mathcal{G})=C-A$, which is real and symmetric. The gauge transformation
for $\mathcal{G}$ is performed by the diagonal block matrix $D=\text{{\bf blkdiag}}\left\{ \sigma_{1},\sigma_{2},\ldots,\sigma_{n}\right\} $
where $\sigma_{i}=I_{d}$ or $\sigma_{i}=-I_{d}$. A gauge transformed
Laplacian is a matrix $\bar{L}$ such that $\bar{L}=DLD$.

A path $\mathcal{P}$ in a matrix-weighted graph $\mathcal{G}=(\mathcal{V},\mathcal{E},\mathcal{A})$
is defined as a sequence of edges in the form of $\{(i_{1},i_{2}),(i_{2},i_{3}),\ldots,(i_{p-1},i_{p})\}$
where nodes $i_{1},i_{2},\ldots,i_{p}\in\mathcal{V}$ are all distinct
and it is said that $i_{1}$ is reachable from $i_{p}$. A matrix-weighted
graph $\mathcal{G}$ is connected if any two distinct nodes in $\mathcal{G}$
are reachable from each other. All graphs mentioned in this paper,
unless stated otherwise, are assumed to be connected. The sign of
a path $\text{{\bf sgn}}(\mathcal{P})$ is defined as $\text{{\bf sgn}}(A_{i_{1}i_{2}})\cdots\text{{\bf sgn}}(A_{i_{|\mathcal{P}|}i_{|\mathcal{P}|+1}})$,
while the null space of the path $\text{{\bf null}}(\mathcal{P})$
refers to ${\displaystyle \bigcup_{k=1}^{|\mathcal{P}|}}\text{{\bf null}}(A_{i_{k}i_{k+1}})$.
A path is said to be positive/negative definite if there is no semi-definite
weight matrix on the path, i.e., ${\bf null}(\mathcal{P})={\bf span}\{{\bf 0}\}$;
otherwise, if ${\bf null}(\mathcal{P})\neq{\bf span}\{{\bf 0}\}$,
the path is positive/negative semi-definite. A positive-negative tree
in a matrix-weighted graph is a tree such that every edge in this
tree is either positive definite or negative definite. A positive-negative
spanning tree of a matrix-weighted graph $\mathcal{G}$ is a positive-negative
tree containing all nodes in $\mathcal{G}$. A cycle $\mathcal{C}$
of $\mathcal{G}$ is a path that starts and ends with the same node,
i.e., $\mathcal{C}=\{(i_{1},i_{2}),(i_{2},i_{3}),\ldots,(i_{p-1},i_{1})\}$.
Note that a spanning-tree does not contain any circle. The sign of
the cycle ${\bf sgn}(\mathcal{C})$ is defined similarly as that of
the path, we say the cycle is negative if it contains an odd number
of negative (semi-)definite weight matrices (${\bf sgn}(\mathcal{C})<0$),
and it is positive if the negative connections are of even number
(${\bf sgn}(\mathcal{C})>0$).

It is well-known that the structural balance of signed networks is
a paramount graph-theoretic condition for achieving (bipartite) consensus.
For matrix-weighted networks, there is an analogous definition as
follows.
\begin{defn}
\citet{Pan2019} A matrix-weighted network $\mathcal{G}=(\mathcal{V},\mathcal{E},\mathcal{A})$
is $(\mathcal{V}_{1},\mathcal{V}_{2})-$structurally balanced if there
exists a bipartition of nodes $\mathcal{V}=\mathcal{V}_{1}\cup\mathcal{V}_{2},\mathcal{V}_{1}\cap\mathcal{V}_{2}=\emptyset$,
such that the matrix-valued weight between any two nodes within each
subset is positive (semi-)definite, but negative (semi-)definite for
edges connecting nodes of different subsets. A matrix-weighted network
is structurally imbalanced if it is not structurally balanced.
\end{defn}
By indexing the edges into $\mathcal{E}=\{e_{1},...,e_{|\mathcal{E}|}\}$
along with their weight matrices $\mathcal{A}=\{A_{1},...,A_{|\mathcal{E}|}\}$,
we have the following definition of signed incidence matrix for matrix-weighted
networks. 
\begin{defn}
\label{def:incidence} A signed incidence matrix $H={\bf blk}\{I_{d},-I_{d},{\bf 0}_{d\times d}\}$
of a matrix-weighted network $\mathcal{G}=(\mathcal{V},\mathcal{E},\mathcal{A})$
is an $|\mathcal{E}|d\times nd$ block matrix for which, the $k$-th
$d\times dn$ row block $H^{k},k\in\underline{|\mathcal{E}|}$, corresponds
to the edge $e_{k}$ with weight matrix $A_{ij}$ between agent $i$
and $j$. The $i$-th and $j$-th blocks of $H^{k}$ are $I_{d}$
and $-I_{d}$ respectively if $A_{ij}\succ0(A_{ij}\succeq0)$, while
let them be $I_{d}$ and $I_{d}$ if $A_{ij}\prec0(A_{ij}\preceq0)$;
any other block would be ${\bf 0}_{d\times d}$.
\end{defn}
\begin{lem}
\label{lem:1}Let $H$ be the signed incidence matrix of a matrix-weighted
network $\mathcal{G}=(\mathcal{V},\mathcal{E},\mathcal{A})$. Then
the matrix-valued Laplacian of $\mathcal{G}$ can be characterized
by

\[
L=H^{T}\text{{\bf blkdiag}}\{|A_{k}|\}H,
\]

\noindent where the $k$-th $d\times dn$ row block of $H$ corresponds
to the edge whose matrix weight is the $k$-th block in $\text{{\bf blkdiag}}\{|A_{k}|\}$.
\end{lem}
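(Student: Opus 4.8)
The plan is to establish the identity blockwise, expanding the right-hand side as a sum over the edges. Abbreviating $W=\text{{\bf blkdiag}}\{|A_{k}|\}$ and writing $H^{k}$ for the $k$-th $d\times dn$ row block of $H$, one has
\[
H^{T}WH=\sum_{k=1}^{|\mathcal{E}|}(H^{k})^{T}|A_{k}|H^{k},
\]
so it suffices to compute the $d\times d$ blocks of each summand and add them. Fix an edge $e_{k}=(i,j)$ with weight $A_{ij}$; by Definition~\ref{def:incidence} the only nonzero blocks of $H^{k}$ sit in the $i$-th and $j$-th block-columns, hence $(H^{k})^{T}|A_{k}|H^{k}$ is supported on the four blocks in positions $(i,i),(j,j),(i,j),(j,i)$.

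The key step is the case analysis dictated by Definition~\ref{def:incidence}. If $A_{ij}\succ0$ or $A_{ij}\succeq0$, the $i$-th and $j$-th blocks of $H^{k}$ are $I_{d}$ and $-I_{d}$ and $|A_{k}|=A_{ij}$, so the $(i,i)$- and $(j,j)$-blocks of $(H^{k})^{T}|A_{k}|H^{k}$ equal $|A_{k}|=A_{ij}$ while the $(i,j)$- and $(j,i)$-blocks equal $-|A_{k}|=-A_{ij}$. If instead $A_{ij}\prec0$ or $A_{ij}\preceq0$, both the $i$-th and $j$-th blocks of $H^{k}$ are $I_{d}$ and $|A_{k}|=-A_{ij}$, so the $(i,i)$- and $(j,j)$-blocks are again $|A_{k}|$ but the $(i,j)$- and $(j,i)$-blocks are now $+|A_{k}|=-A_{ij}$. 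The point worth underlining is that the sign convention of Definition~\ref{def:incidence} is tailored precisely so that in both cases the off-diagonal $(i,j)$-block equals $-A_{ij}$, i.e.\ $-{\bf blk}_{ij}(A)$, while the diagonal $(i,i)$-contribution is $|A_{ij}|$.

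Finally I would sum over $k$. For $i\neq j$, the block in position $(i,j)$ receives a contribution only from the edge $e_{k}=(i,j)$ when it exists, namely $-A_{ij}$, and vanishes otherwise, matching $-A$. For the diagonal block $(i,i)$, the contributions of all edges incident to $i$ add up to $\sum_{j\in\mathcal{N}_{i}}|A_{ij}|=C_{i}$, matching the degree block. Hence $H^{T}WH=C-A=L$. I do not anticipate a genuine obstacle here: the argument is bookkeeping, and the only care required is to verify that the two sign cases, together with the conventions for $|\cdot|$, ${\bf sgn}(\cdot)$ and the zero weight matrix fixed in \S2.1, are mutually consistent --- which is exactly what makes both cases collapse to the single formula $-A_{ij}$ off the diagonal.
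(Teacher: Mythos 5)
Your blockwise expansion is correct: the case analysis on the sign of $A_{ij}$ does make both off-diagonal contributions collapse to $-A_{ij}$ and the diagonal ones sum to $C_{i}$, giving $H^{T}\text{{\bf blkdiag}}\{|A_{k}|\}H=C-A=L$. The paper omits this proof as straightforward, and what you have written is precisely the routine verification that was intended, so there is nothing to contrast in approach.
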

\begin{proof}
The proof is straightforward thus is omitted.
\end{proof}

\section{Problem Formulation and Motivation}

Consider a multi-agent network consisted of $n\in\mathbb{N}$ agents.
The states of each agent $i\in\mathcal{V}$ is denoted by $x_{i}(t)=\mathbb{R}^{d}$
where $d\in\mathbb{N}$. The interaction protocol reads

\begin{equation}
\dot{x}_{i}(t)=-\sum_{j\in\mathcal{N}_{i}}|A_{ij}|(x_{i}(t)-\text{{\bf sgn}}(A_{ij})x_{j}(t)),i\in\mathcal{V},\label{eq:protocol}
\end{equation}

\noindent where $A_{ij}\in\mathbb{R}^{d\times d}$ denotes the weight
matrix on edge $(i,j)$. The collective dynamics of the multi-agent
network \eqref{eq:protocol} can be characterized by

\begin{equation}
\dot{x}(t)=-Lx(t),\label{eq:overall-dynamics}
\end{equation}
where $x(t)=[x_{1}^{T}(t),x_{2}^{T}(t),\ldots,x_{n}^{T}(t)]^{T}\in\mathbb{R}^{dn}$
and $L$ is the matrix-valued graph Laplacian.
\begin{defn}[\textcolor{black}{Bipartite Consensus}]
 The multi-agent network \eqref{eq:protocol} is said to admit a
bipartite consensus solution if there exists a solution $x$ such
that $\lim{}_{t\rightarrow\infty}x_{i}(t)=\lim{}_{t\rightarrow\infty}x_{j}(t)\not=\boldsymbol{0}$
or $\lim_{t\rightarrow\infty}x_{i}(t)=-\lim_{t\rightarrow\infty}x_{j}(t)\neq{\bf 0}$
for any $\{i,j\}\subset\mathcal{V}$. When $\lim{}_{t\rightarrow\infty}x_{i}(t)=\lim{}_{t\rightarrow\infty}x_{j}(t)={\bf 0}$
for all $\{i,j\}\subset\mathcal{V}$, the network admits a trivial
consensus.
\end{defn}
We employ the following example to motivate our work in this paper.
\begin{example}
\label{exa:1}Consider the network $\mathcal{G}_{1}$ in Figure \ref{fig:Figure1}
which is structurally imbalanced with one negative circle. One may
obtain a structurally balanced network from it by negating the sign
of $e_{23}$ or alternatively, by negating the sign of $e_{34}$.
\end{example}
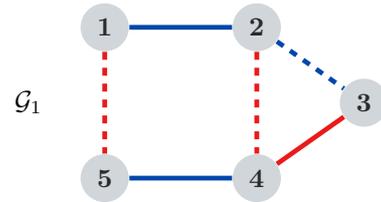
\begin{figure}[h]
\begin{centering}
\begin{tikzpicture}[scale=1]
	\definecolor{dodgerblue}{RGB}{0,71,171}
	\definecolor{darkred}{RGB}{230,0,0}
    \definecolor{PT}{RGB}{112,128,144}

    \node (n5) at (0,0) [circle,fill=PT!40,opacity=0.8] {\bf{5}};
	\node (n1) at (0,2) [circle,fill=PT!40,opacity=0.8] {\bf{1}};
    \node (n2) at (2,2) [circle,fill=PT!40,opacity=0.8] {\bf{2}};
	\node (n3) at (3.41,1) [circle,fill=PT!40,opacity=0.8] {\bf{3}};
    \node (n4) at (2,0) [circle,fill=PT!40,opacity=0.8] {\bf{4}};

	\node (G_1) at (-1,1) {$\mathcal{G}_1$};


	\draw[-, line width=1.8pt, color=dodgerblue]  (n1) -- (n2); 
	\draw[-, line width=1.8pt, color=dodgerblue, dashed]  (n2) -- (n3); 
	\draw[-, line width=1.8pt, color=darkred!90]  (n3) -- (n4); 
	\draw[-, line width=1.8pt, color=darkred!90, dashed]  (n2) -- (n4); 
	\draw[-, line width=1.8pt, color=dodgerblue]  (n4) -- (n5); 
	\draw[-, line width=1.8pt, color=darkred!90, dashed]  (n1) -- (n5); 

\end{tikzpicture}
\par\end{centering}
\caption{A structurally imbalanced matrix-weighted network $\mathcal{G}_{1}$.
The red solid (resp., dashed) lines denote edges weighted by positive
definite (resp., semi-definite) matrices; the blue solid (resp., dashed)
lines denote edges weighted by negative definite (resp., semi-definite)
matrices.}
\label{fig:Figure1}
\end{figure}

\noindent The edges are endowed with matrix-valued weights
\begin{align*}
A_{23} & =\begin{bmatrix}-2 & 2 & 0\\
2 & -2 & 0\\
0 & 0 & -1
\end{bmatrix}\preceq0,\\
A_{24} & =\begin{bmatrix}1 & 0 & 0\\
0 & 0 & 0\\
0 & 0 & 1
\end{bmatrix}\succeq0,\\
A_{12} & =\begin{bmatrix}-2 & 0 & 0\\
0 & -1 & 0\\
0 & 0 & -1
\end{bmatrix}\prec0,
\end{align*}
and $A_{15}=-A_{23},A_{45}=A_{12},A_{34}=-A_{12}$. Note that ${\bf null}(A_{23})={\bf null}(A_{15})={\bf span}\{[\begin{array}{ccc}
1 & 1 & 0\end{array}]^{T}\}$ and ${\bf null}(A_{24})={\bf span}\{[\begin{array}{ccc}
0 & 1 & 0\end{array}]^{T}\}$. The null space of the weight matrices of the remaining edges are
trivially spanned by the zero vector.

Under the above selection of edge weights, we examine the evolution
of multi-agent system \eqref{eq:overall-dynamics} on $\mathcal{G}_{1}$,
yielding the state trajectories of each agent shown in Figure \ref{fig:Figure2}.
Despite the fact that $\mathcal{G}_{1}$ is structurally imbalanced,
a bipartition of agents emerges by their steady-states, namely the
agents of $\mathcal{V}_{a}=\{2,3,4\}$ converge to $[\begin{array}{ccc}
1.6525 & 1.6525 & 0\end{array}]^{T}$, others of $\mathcal{V}_{b}=\{1,5\}$ converge to $[\begin{array}{ccc}
-1.6525 & -1.6525 & 0\end{array}]^{T}$, and both of the steady states are spanned by $[\begin{array}{ccc}
1 & 1 & 0\end{array}]^{T}$. What we have noticed is that $[\begin{array}{ccc}
1 & 1 & 0\end{array}]^{T}$ happens to span the null space of $A_{23}$; interestingly, by negating
the sign of $A_{23}$, the resulting network becomes structurally
balanced and the structurally balanced partition of nodes is precisely
$\mathcal{V}_{a}=\{2,3,4\}$ and $\mathcal{V}_{b}=\{1,5\}$. 

\begin{figure}[th]
\begin{centering}
\includegraphics[width=7cm]{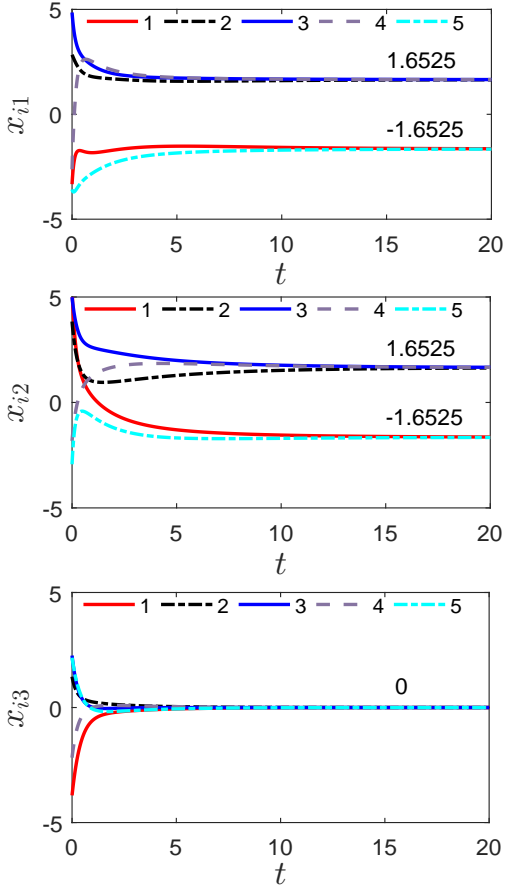}
\par\end{centering}
\caption{\textcolor{black}{State trajectories of multi-agent dynamics \eqref{eq:overall-dynamics}
on} $\mathcal{G}_{1}$\textcolor{black}{.}}
\label{fig:Figure2}
\end{figure}

The edge $e_{23}$ is, therefore, deemed highly correlated with the
bipartite consensus solution of the network in Figure \ref{fig:Figure1}.
A comparison with edges $e_{24}$ and $e_{34}$ elaborates on how
$e_{23}$ is being essential in realizing the bipartite consensus.
First, we notice that both $e_{23}$ and $e_{24}$ are of non-trivial
null space, yet the dynamics did not converge to ${\bf null}(A_{24})={\bf span}\{[\begin{array}{ccc}
0 & 1 & 0\end{array}]^{T}\}$, presumably because the negation of $e_{24}$ does not yield any
structurally balanced partition of the graph, which means $e_{24}$
is structurally unimportant. As for $e_{23}$ and $e_{34}$, both
being structurally critical in the above sense (structurally balanced
partition $\mathcal{V}_{a}=\{1,3,5\},\mathcal{V}_{b}=\{2,4\}$ for
$e_{34}$), we notice that the bipartition of the convergence has
abided by the grouping of $e_{23}$ instead of that of $e_{34}$,
presumably because ${\bf null}(A_{34})$ is merely ${\bf span}\{{\bf 0}\}$
and algebraically trivial. The edge $e_{23}$ is thus intuitively
significant since it is the only edge in the graph that is non-trivial
in both senses; it is where the structural and algebraic properties
of the network intersect.

Motivated by the above observations, we make the assumption that the
bipartite consensus of the matrix-weighted network indicates the existence
of a set of edges with non-trivial intersecting null spaces and, by
their negation, restore the potential structural balance of the network.
For the rest of this paper, we attempt to make a valid definition
of these edges, and to examine the role they ought to play in the
bipartite consensus of multi-agent systems on matrix-weighted networks. 

\section{Balancing Set of Matrix-weighted Networks}

A prominent structural feature of the edge we studied in Example \ref{exa:1}
concerns its negation of the sign, by which the structurally imbalanced
network is rendered structurally balanced. This method was studied
in \citet{katai1978studies,harary1959measurement} which suggested
that given any structurally imbalanced network, one can always transform
it into a structurally balanced one with any preassigned node partition
by negating the signs of the relevant edges. We refer to this approach
in the literature and introduce the following concept to embed it
in the context of the matrix-weighted multi-agent networks. 
\begin{defn}[Balancing set]
\label{def:negation set}Let $(\mathcal{V}_{1},\mathcal{V}_{2})$
denote a bipartition of node set $\mathcal{V}$ in a network $\mathcal{G}=(\mathcal{V},\mathcal{E},\mathcal{A})$.
Define the $(\mathcal{V}_{1},\mathcal{V}_{2})-$balancing set $\mathcal{E}^{b}(\mathcal{V}_{1},\mathcal{V}_{2})$
as a set of edges such that a $(\mathcal{V}_{1},\mathcal{V}_{2})-$structurally
balanced network can be obtained if the sign of each edge in $\mathcal{E}^{b}(\mathcal{V}_{1},\mathcal{V}_{2})$
is negated. 
\end{defn}
\begin{example}
\label{exa:negation}In Figure \ref{fig:Figure3} we have constructed
a matrix-weighted network $\mathcal{G}$ that is structurally imbalanced.
Given a node partition $(\mathcal{V}_{1},\mathcal{V}_{2})$ of $\mathcal{G}$,
its balancing set $\mathcal{E}^{b}(\mathcal{V}_{1},\mathcal{V}_{2})$
is consisted of the negative connections within $\mathcal{V}_{1}$
and $\mathcal{V}_{2}$ and the positive connections between them (edges
colored in red). By negating the signs of the edges in $\mathcal{E}^{b}(\mathcal{V}_{1},\mathcal{V}_{2})$,
we derive a graph $\mathcal{G}'$ that is structurally balanced between
$\mathcal{V}_{1}$ and $\mathcal{V}_{2}$.
\end{example}
\begin{rem}
The $(\mathcal{V}_{1},\mathcal{V}_{2})-$balancing set $\mathcal{E}^{b}(\mathcal{V}_{1},\mathcal{V}_{2})$
is empty if and only if the network is $(\mathcal{V}_{1},\mathcal{V}_{2})-$structurally
balanced. For instance, the balancing set $\mathcal{E}^{b}(\mathcal{V}_{1},\mathcal{V}_{2})$
of the structurally balanced $\mathcal{G}'$ in Figure \ref{fig:Figure3}
is empty since there is no edge to be negated.
\end{rem}
\begin{figure}
\begin{centering}
\usetikzlibrary {shapes.geometric}
\begin{tikzpicture}[scale=0.8]
    \definecolor{Vermilion}{RGB}{255,77,0}
    \definecolor{DPB}{RGB}{0,71,171}
    \definecolor{Sapphire}{RGB}{8,37,103}

	\node (n1) at (0,0) [circle,shading=radial,inner color=Sapphire,radius=0.1,draw,Sapphire!40]{};
	\node (n2) at (1.2,-0.4) [circle,shading=radial,inner color=Sapphire,radius=0.1,draw,Sapphire!40]{};
    \node (n3) at (1.5,0.6) [circle,shading=radial,inner color=Sapphire,radius=0.1,draw,Sapphire!40]{};
	\node (n4) at (1,1.7) [circle,shading=radial,inner color=Sapphire,radius=0.1,draw,Sapphire!40]{};
    \node (n5) at (2.6,1.8) [circle,shading=radial,inner color=Sapphire,radius=0.1,draw,Sapphire!40]{};
    \node (n6) at (3.8,1.2) [circle,shading=radial,inner color=Sapphire,radius=0.1,draw,Sapphire!40]{};
    \node (n7) at (4.4,0) [circle,shading=radial,inner color=Sapphire,radius=0.1,draw,Sapphire!40]{};
    \node (n8) at (3.2,-0.2) [circle,shading=radial,inner color=Sapphire,radius=0.1,draw,Sapphire!40]{};

	\node (G) at (-1,1) {$\mathcal{G}$};
    \node (V1) at (1.8,-1.5) {$\mathcal{V}_1$};
    \node (V2) at (5,-1) {$\mathcal{V}_2$};

	\draw[-, very thin, double distance=1pt, double=Vermilion, color=DPB]  (n1) -- (n2); 
	\draw[-, thick, color=Sapphire, opacity=0.6]  (n2) -- (n3); 
	\draw[-, thick, color=Sapphire, opacity=0.6]  (n1) -- (n3); 
    \draw[-, very thin, double distance=1pt, double=Vermilion, color=DPB]  (n3) -- (n4); 
    \draw[-, thick, color=Sapphire, opacity=0.6]  (n1) -- (n4); 
	\draw[-, thick, color=Sapphire, opacity=0.6]  (n4) -- (n5); 
    \draw[-, very thin, double distance=1pt, double=Vermilion, color=DPB]  (n3) -- (n5); 
    \draw[-, thick, color=Sapphire, opacity=0.6]  (n3) -- (n8); 
    \draw[-, thick, color=Sapphire, opacity=0.6]  (n5) -- (n8); 
    \draw[-, thick, color=Sapphire, opacity=0.6]  (n5) -- (n6); 
    \draw[-, very thin, double distance=1pt, double=Vermilion, color=DPB]  (n6) -- (n8); 
    \draw[-, thick, color=Sapphire, opacity=0.6]  (n6) -- (n7);
    \draw[-, thick, color=Sapphire, opacity=0.6]  (n7) -- (n8); 

    \node (12) at (0.6,-0.3) [font=\tiny] {\bf{$-$}};
    \node (23) at (1.45,0) [font=\tiny] {\bf{$+$}};
    \node (13) at (0.8,0.45) [font=\tiny] {\bf{$+$}};
    \node (14) at (0.35,0.85) [font=\tiny] {\bf{$+$}};
    \node (34) at (1.4,1.14) [font=\tiny] {\bf{$-$}};
    \node (45) at (1.8,1.85) [font=\tiny] {\bf{$-$}};
    \node (35) at (2.2,1.15) [font=\tiny] {\bf{$+$}};
    \node (38) at (2.4,0) [font=\tiny] {\bf{$-$}};
    \node (58) at (2.95,0.9) [font=\tiny] {\bf{$+$}};
    \node (56) at (3.2,1.6) [font=\tiny] {\bf{$+$}};
    \node (68) at (3.4,0.7) [font=\tiny] {\bf{$-$}};
    \node (67) at (4.2,0.6) [font=\tiny] {\bf{$+$}};
    \node (78) at (3.8,-0.2) [font=\tiny] {\bf{$+$}};

    \draw [dotted] (0.7,0.6) ellipse (1.1 and 2);
    \draw [dotted] (3.6,0.8) ellipse (1.5 and 1.9);

\end{tikzpicture}
\par\end{centering}
\begin{centering}
\usetikzlibrary {shapes.geometric}
\begin{tikzpicture}[scale=0.8]
    \definecolor{Vermilion}{RGB}{255,77,0}
    \definecolor{DPB}{RGB}{0,71,171}
    \definecolor{Sapphire}{RGB}{8,37,103}

	\node (n1) at (0,0) [circle,shading=radial,inner color=Sapphire,radius=0.1,draw,Sapphire!40]{};
	\node (n2) at (1.2,-0.4) [circle,shading=radial,inner color=Sapphire,radius=0.1,draw,Sapphire!40]{};
    \node (n3) at (1.5,0.6) [circle,shading=radial,inner color=Sapphire,radius=0.1,draw,Sapphire!40]{};
	\node (n4) at (1,1.7) [circle,shading=radial,inner color=Sapphire,radius=0.1,draw,Sapphire!40]{};
    \node (n5) at (2.6,1.8) [circle,shading=radial,inner color=Sapphire,radius=0.1,draw,Sapphire!40]{};
    \node (n6) at (3.8,1.2) [circle,shading=radial,inner color=Sapphire,radius=0.1,draw,Sapphire!40]{};
    \node (n7) at (4.4,0) [circle,shading=radial,inner color=Sapphire,radius=0.1,draw,Sapphire!40]{};
    \node (n8) at (3.2,-0.2) [circle,shading=radial,inner color=Sapphire,radius=0.1,draw,Sapphire!40]{};

	\node (G') at (-1,1) {$\mathcal{G}'$};
    \node (V1) at (1.8,-1.5) {$\mathcal{V}_1$};
    \node (V2) at (5,-1) {$\mathcal{V}_2$};

	\draw[-, thick, color=Sapphire, opacity=0.6]  (n1) -- (n2); 
	\draw[-, thick, color=Sapphire, opacity=0.6]  (n2) -- (n3); 
	\draw[-, thick, color=Sapphire, opacity=0.6]  (n1) -- (n3); 
    \draw[-, thick, color=Sapphire, opacity=0.6]  (n3) -- (n4); 
    \draw[-, thick, color=Sapphire, opacity=0.6]  (n1) -- (n4); 
	\draw[-, thick, color=Sapphire, opacity=0.6]  (n4) -- (n5); 
    \draw[-, thick, color=Sapphire, opacity=0.6]  (n3) -- (n5); 
    \draw[-, thick, color=Sapphire, opacity=0.6]  (n3) -- (n8); 
    \draw[-, thick, color=Sapphire, opacity=0.6]  (n5) -- (n8); 
    \draw[-, thick, color=Sapphire, opacity=0.6]  (n5) -- (n6); 
    \draw[-, thick, color=Sapphire, opacity=0.6]  (n6) -- (n8); 
    \draw[-, thick, color=Sapphire, opacity=0.6]  (n6) -- (n7);
    \draw[-, thick, color=Sapphire, opacity=0.6]  (n7) -- (n8); 

    \node (12) at (0.6,-0.3) [font=\tiny] {\bf{$+$}};
    \node (23) at (1.45,0) [font=\tiny] {\bf{$+$}};
    \node (13) at (0.8,0.45) [font=\tiny] {\bf{$+$}};
    \node (14) at (0.35,0.85) [font=\tiny] {\bf{$+$}};
    \node (34) at (1.4,1.14) [font=\tiny] {\bf{$+$}};
    \node (45) at (1.8,1.85) [font=\tiny] {\bf{$-$}};
    \node (35) at (2.2,1.15) [font=\tiny] {\bf{$-$}};
    \node (38) at (2.4,0) [font=\tiny] {\bf{$-$}};
    \node (58) at (2.95,0.9) [font=\tiny] {\bf{$+$}};
    \node (56) at (3.2,1.6) [font=\tiny] {\bf{$+$}};
    \node (68) at (3.4,0.7) [font=\tiny] {\bf{$+$}};
    \node (67) at (4.2,0.6) [font=\tiny] {\bf{$+$}};
    \node (78) at (3.8,-0.2) [font=\tiny] {\bf{$+$}};

    \draw [dotted] (0.7,0.6) ellipse (1.1 and 2);
    \draw [dotted] (3.6,0.8) ellipse (1.5 and 1.9);

\end{tikzpicture}
\par\end{centering}
\caption{The negation operation in Example \ref{exa:negation}.}
\label{fig:Figure3}
\end{figure}
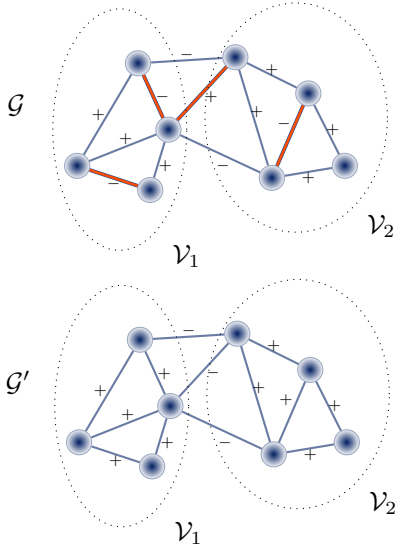

Note that the matrix-valued weight plays a role in shaping the null
space of the matrix-valued graph Laplacian, thus more constraints
on the balancing set are needed to complete the definition. We proceed
to quantitatively characterize the contribution of matrix-valued weights
to the null space of the matrix-valued graph Laplacian.
\begin{defn}
\label{def:common-null} A set of matrices $A_{i}\in\mathbb{R}^{n\times n},i\in\underline{l},$
are said to have non-trivial intersection of null spaces, or, to have
non-trivial intersecting null space if 
\[
\bigcap_{i=1}^{l}{\bf null}(A_{i})\neq\left\{ {\bf 0}\right\} .
\]
\end{defn}
\begin{defn}[\textcolor{black}{\emph{Non-trivial balancing set}}]
\label{def:NBS} A $(\mathcal{V}_{1},\mathcal{V}_{2})-$balancing
set $\mathcal{E}^{b}(\mathcal{V}_{1},\mathcal{V}_{2})$ of a matrix-weighted
network $\mathcal{G}=(\mathcal{V},\mathcal{E},\mathcal{A})$ is a
non-trivial $(\mathcal{V}_{1},\mathcal{V}_{2})-$balancing set (NBS),
denoted by $\mathcal{E}^{nb}(\mathcal{V}_{1},\mathcal{V}_{2})$, if
the weight matrices associated with edges in $\mathcal{E}^{b}(\mathcal{V}_{1},\mathcal{V}_{2})$
have non-trivial intersection of null spaces. 
\end{defn}
\begin{rem}
The non-trivial balancing set $\mathcal{E}^{nb}(\mathcal{V}_{1},\mathcal{V}_{2})=\emptyset$
if and only if the corresponding balancing set $\mathcal{E}^{b}(\mathcal{V}_{1},\mathcal{V}_{2})=\emptyset$.
In this case, define $\mathcal{W}(\mathcal{E}^{nb})={\bf 0}$ where
${\bf 0}$ is the $d\times d$ zero matrix. 
\end{rem}
A matrix-weighted network $\mathcal{G}=(\mathcal{V},\mathcal{E},\mathcal{A})$
has a unique non-trivial balancing set if there is only one bipartition
$(\mathcal{V}_{1},\mathcal{V}_{2})$ of $\mathcal{V}$ such that the
corresponding $(\mathcal{V}_{1},\mathcal{V}_{2})$-balancing set meets
\[
\bigcap_{e\in\mathcal{E}^{nb}(\mathcal{V}_{1},\mathcal{V}_{2})}{\bf null}(\mathcal{W}(e))\neq\left\{ {\bf 0}\right\} .
\]
In this case, we shall denote 
\[
\text{{\bf null}}(\mathcal{E}^{nb})=\bigcap_{e\in\mathcal{E}^{nb}(\mathcal{V}_{1},\mathcal{V}_{2})}{\bf null}(\mathcal{W}(e))
\]
for brevity.

\begin{rem}
It is noteworthy that to have only an empty non-trivial balancing
set does not suggest there is no NBS in the graph; a graph without
NBS is a structurally imbalanced graph for which any partition of
nodes has a balancing set whose weight matrices share a trivial intersecting
null space. 
\end{rem}

\section{Main Results}

\subsection{General Networks}

In this section, we set to examine the validity of the concept termed
as the non-trivial balancing set through its correlation with the
network steady-state, bearing in mind the question of to what extent
is the non-trivial balancing set a satisfactory interpretation of
the numerical solutions. 

We shall recall some facts about the algebraic structure of the Laplacian
null space when bipartite consensus is achieved on matrix-weighted
networks. Also, technical preparations Lemma \ref{lem:3} and Lemma
\ref{lem:4} are presented for the proof of Theorem \ref{thm:1st}
and Theorem \ref{thm:2nd}, one is referred to the Appendix for their
proofs. 
\begin{lem}
\label{lem:2} \textup{\citet{su2019bipartite} The matrix-weighted
multi-agent network \eqref{eq:overall-dynamics} achieves bipartite
consensus if and only if there exists a gauge transformation $D$
such that $\text{{\bf null}}(L)=\mathcal{S}=\text{{\bf span}}\{D(\text{\textbf{1}}_{n}\otimes\Psi)\}$,
where $D$ is a gauge transformation, $\Psi=[\psi_{1},\psi_{2},...,\psi_{s}]$,
and $\psi_{i}$, $i\in\underline{s}$, $s\leq d$, are orthogonal
basis vectors in $\mathbb{R}^{d}$.}
\end{lem}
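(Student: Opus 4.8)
The plan is to reduce the dynamical claim to a description of ${\bf null}(L)$ and then read off the limiting pattern. By Lemma~\ref{lem:1}, $L=H^{T}{\bf blkdiag}\{|A_{k}|\}H\succeq 0$, so $e^{-Lt}$ converges to the orthogonal projection $\Pi$ onto ${\bf null}(L)$; hence every solution of \eqref{eq:overall-dynamics} converges, with $\lim_{t\to\infty}x(t)=\Pi x(0)$, and the asymptotic behaviour is entirely encoded in ${\bf null}(L)$. Throughout I would also use the edge description coming from the same factorization: since $|A_{k}|\succeq 0$, writing $x^{T}Lx=\sum_{k}(H^{k}x)^{T}|A_{k}|(H^{k}x)$ shows that $x\in{\bf null}(L)$ if and only if $x_{i}-{\bf sgn}(A_{ij})x_{j}\in{\bf null}(A_{ij})$ for every edge $(i,j)\in\mathcal{E}$. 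I read ``the network achieves bipartite consensus'' as: every trajectory converges and, for some bipartition of $\mathcal{V}$, agents in the same part share a common limit while agents in different parts attain opposite limits, with this common value not identically ${\bf 0}$.

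(\emph{Sufficiency.}) Suppose ${\bf null}(L)={\bf span}\{D({\bf 1}_{n}\otimes\Psi)\}$ with $D={\bf blkdiag}\{\sigma_{1}I_{d},\dots,\sigma_{n}I_{d}\}$, $\sigma_{i}\in\{\pm 1\}$, and $\Psi=[\psi_{1},\dots,\psi_{s}]$, $s\ge 1$, orthonormal. For any $x(0)$, $\lim_{t\to\infty}x(t)=\Pi x(0)=D({\bf 1}_{n}\otimes w)$ for some $w\in{\bf span}\{\psi_{1},\dots,\psi_{s}\}$, i.e. $\lim_{t\to\infty}x_{i}(t)=\sigma_{i}w$ for every $i$; thus any two agents attain equal or opposite limits, and choosing $x(0)=D({\bf 1}_{n}\otimes\psi_{1})\in{\bf null}(L)$ makes the common value $\psi_{1}\neq{\bf 0}$, yielding a genuine bipartite consensus (while $w={\bf 0}$ gives the trivial consensus). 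Hence the network achieves bipartite consensus.

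(\emph{Necessity.}) Conversely, I would pick a solution realizing a bipartite consensus, so $x^{\star}:=\lim_{t\to\infty}x(t)$ satisfies $x_{i}^{\star}=\sigma_{i}v$ for fixed $v\neq{\bf 0}$ and signs $\sigma_{i}\in\{\pm 1\}$; set $D={\bf blkdiag}\{\sigma_{1}I_{d},\dots,\sigma_{n}I_{d}\}$, so that $x^{\star}=D({\bf 1}_{n}\otimes v)\in{\bf null}(L)$. Since ${\bf null}(L)$ consists of equilibria, each of its elements is a limiting configuration of \eqref{eq:overall-dynamics}, hence (under the reading above) itself bipartite; a linear-algebra argument on sums of such elements, together with the edge conditions $y_{i}-{\bf sgn}(A_{ij})y_{j}\in{\bf null}(A_{ij})$ over the connected graph, forces all of them to use the \emph{same} gauge $D$ up to a global sign, so that $y=D({\bf 1}_{n}\otimes w_{y})$ for some $w_{y}\in\mathbb{R}^{d}$. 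Putting $W=\{w\in\mathbb{R}^{d}:D({\bf 1}_{n}\otimes w)\in{\bf null}(L)\}$, this is a subspace of $\mathbb{R}^{d}$ containing $v$, hence nonzero, and by the above ${\bf null}(L)=\{D({\bf 1}_{n}\otimes w):w\in W\}$. Taking $\Psi$ to be any orthonormal basis of $W$ gives ${\bf null}(L)={\bf span}\{D({\bf 1}_{n}\otimes\Psi)\}$ with $1\le\dim W\le d$, as claimed.

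I expect the necessity direction to carry the weight of the argument, specifically the step forcing \emph{every} element of ${\bf null}(L)$ — not just the distinguished equilibrium $x^{\star}$ — into a common gauge-transformed agreement form. This is exactly where the matrix-weighted setting departs from Altafini's scalar theory: the null spaces of the semidefinite weight matrices (the non-trivial balancing set of the sequel) can a priori enlarge ${\bf null}(L)$, so one must track, along paths through the connected graph, how the admissible disagreements $y_{i}-{\bf sgn}(A_{ij})y_{j}$ are squeezed into successive intersections of weight null spaces, and reconcile the bipartitions produced by different equilibria. Making this bookkeeping precise, and confirming that no ``spurious'' non-agreement vector survives in ${\bf null}(L)$, is the crux.
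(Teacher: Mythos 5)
First, note that the paper offers no proof of Lemma \ref{lem:2}: it is quoted from \citet{su2019bipartite}, so there is no in-paper argument to compare yours against; I can only assess your proposal on its own terms and against the machinery the paper does develop. Your sufficiency direction is complete and correct: $L=H^{T}{\bf blkdiag}\{|A_{k}|\}H\succeq0$, so $e^{-Lt}$ converges to the orthogonal projector onto ${\bf null}(L)$, and if ${\bf null}(L)={\bf span}\{D({\bf 1}_{n}\otimes\Psi)\}$ then every limit has the form $D({\bf 1}_{n}\otimes w)$, with $w\neq{\bf 0}$ attainable. For necessity, the step you flag as the crux is a genuine gap as written, but it does not require the path-tracking bookkeeping you anticipate. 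The cleaner route, and the one the paper's own toolkit supports, is: every element of ${\bf null}(L)$ is an equilibrium and hence its own limit, so under your universal reading of bipartite consensus every $y\in{\bf null}(L)$ is already of the form $D_{y}({\bf 1}_{n}\otimes w_{y})$; since ${\bf null}(L)$ is closed under addition, two such elements with nonzero $w$'s and gauges $D_{1}\neq\pm D_{2}$ would sum to another element of ${\bf null}(L)$, which Lemma \ref{lem:4} (proved via Lemma \ref{lem:5} in the Appendix) shows cannot be of gauge-consensus form --- a contradiction. Hence all nonzero elements share one gauge $D$ up to global sign, $W=\{w:D({\bf 1}_{n}\otimes w)\in{\bf null}(L)\}$ is a nonzero subspace, and any orthonormal basis of $W$ serves as $\Psi$. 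You should carry this out explicitly rather than leaving it as a plan.

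Second, a logical caveat you half-see but should confront: under the paper's literal definition of bipartite consensus (``there exists a solution such that\dots''), the necessity direction is false. Take $n=2$ with one edge, $A_{12}\succeq0$ and ${\bf null}(A_{12})={\bf span}\{u\}\neq\{{\bf 0}\}$: then ${\bf null}(L)=\{(x_{1},x_{2}):x_{1}-x_{2}\in{\bf span}\{u\}\}$ has dimension $d+1$, so it cannot equal ${\bf span}\{D({\bf 1}_{2}\otimes\Psi)\}$, whose dimension is at most $d$, yet the constant solution $x\equiv{\bf 1}_{2}\otimes v$ with $v\neq{\bf 0}$ witnesses a bipartite consensus solution in the existential sense. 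Your stronger reading (every trajectory converges to a gauge-consensus configuration, at least one of them nonzero) is the one under which the lemma is true and the one the paper implicitly relies on downstream; state it as a hypothesis rather than as an interpretation, or the equivalence you are proving is not the one the displayed definition supports.
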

\begin{rem}
The space $\mathcal{S}$ in Lemma \ref{lem:2} is defined as the \textit{bipartite
consensus subspace} in \citet{su2019bipartite}. And in \citet{Pan2019},
when $s=d$, $\mathcal{S}$ is proved to be the bipartite consensus
subspace of a structurally balanced matrix-weighted network.
\end{rem}
\begin{lem}
\label{lem:3} Let $\mathcal{G}=(\mathcal{V},\mathcal{E},\mathcal{A})$
be a matrix-weighted network with $n$ agents, each of dimension $d$.
If $\mathcal{G}$ is structurally balanced with partition $(\mathcal{V}_{1},\mathcal{V}_{2})$,
then there exists a gauge transformation $D\in\mathbb{R}^{nd\times nd}$
such that ${\bf span}\{D({\bf 1}_{n}\otimes I_{d})\}\subset\text{{\bf null}}(L(\mathcal{G}))$,
and ${\bf blk}_{ii}(D)=I_{d}$ for $i\in\mathcal{V}_{1}$ and ${\bf blk}_{ii}(D)=-I_{d}$
for $i\in\mathcal{V}_{2}$, where $L(\mathcal{G})$ is the network
Laplacian.
\end{lem}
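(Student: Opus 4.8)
The plan is to construct the gauge transformation $D$ explicitly from the structurally balanced partition $(\mathcal{V}_1,\mathcal{V}_2)$ and then verify the claimed properties directly, leveraging the factorization $L=H^T\mathbf{blkdiag}\{|A_k|\}H$ from Lemma~\ref{lem:1}. First I would define $D=\mathbf{blkdiag}\{\sigma_1,\ldots,\sigma_n\}$ with $\sigma_i=I_d$ for $i\in\mathcal{V}_1$ and $\sigma_i=-I_d$ for $i\in\mathcal{V}_2$; this immediately gives the block-diagonal structure asserted at the end of the statement, so the only substantive claim is the inclusion $\mathbf{span}\{D(\mathbf{1}_n\otimes I_d)\}\subset\mathbf{null}(L(\mathcal{G}))$. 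Since the columns of $D(\mathbf{1}_n\otimes I_d)$ are exactly the vectors $v=D(\mathbf{1}_n\otimes w)$ for $w\in\mathbb{R}^d$, it suffices to show $Lv=0$ for every such $v$, equivalently $LD(\mathbf{1}_n\otimes w)=0$.

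The key step is to examine the action of the signed incidence matrix $H$ on $D(\mathbf{1}_n\otimes w)$ row-block by row-block. Consider the $k$-th row block $H^k$ corresponding to an edge $e_k=(i,j)$. If $A_{ij}\succ0$ (or $\succeq0$), then $H^k$ has $I_d$ in block $i$ and $-I_d$ in block $j$, so $H^k D(\mathbf{1}_n\otimes w)=\sigma_i w-\sigma_j w$; structural balance forces $i,j$ to lie in the same part, hence $\sigma_i=\sigma_j$ and this block vanishes. If instead $A_{ij}\prec0$ (or $\preceq0$), then $H^k$ has $I_d$ in both blocks $i$ and $j$, giving $\sigma_i w+\sigma_j w$; structural balance now forces $i,j$ into opposite parts, so $\sigma_i=-\sigma_j$ and again the block vanishes. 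Thus $HD(\mathbf{1}_n\otimes w)=0$, and therefore $LD(\mathbf{1}_n\otimes w)=H^T\mathbf{blkdiag}\{|A_k|\}\,HD(\mathbf{1}_n\otimes w)=0$, which is precisely the desired inclusion.

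I do not anticipate a serious obstacle here; the argument is a direct computation once the right $D$ is written down, and the two-case sign bookkeeping is the only place where care is needed — specifically, matching the convention in Definition~\ref{def:incidence} (where negative-definite edges put $+I_d$, not $-I_d$, in the incidence block) against the parity condition imposed by $(\mathcal{V}_1,\mathcal{V}_2)$-structural balance. One minor point worth stating is that $|A_k|\succeq0$ for every edge, so no sign cancellations occur in $\mathbf{blkdiag}\{|A_k|\}$ and the factorization of Lemma~\ref{lem:1} applies verbatim; the semi-definite edges cause no difficulty because their null spaces only enlarge $\mathbf{null}(L)$, and we are merely asserting a subspace inclusion rather than an equality.
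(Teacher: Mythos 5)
Your proposal is correct and takes essentially the same route as the paper: construct $D$ directly from the partition and verify edge by edge that structural balance forces each signed difference $\sigma_i w \mp \sigma_j w$ to vanish, the only cosmetic difference being that you organize the computation through the incidence factorization $L=H^{T}\mathbf{blkdiag}\{|A_k|\}H$ while the paper checks $A_{ij}(x_i-\mathbf{sgn}(A_{ij})x_j)=\mathbf{0}$ on the Laplacian directly. The paper additionally spells out the disconnected case by decomposing $L$ into component blocks, but your edge-wise argument covers that situation without modification.
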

\begin{lem}
\label{lem:4}For any nonzero $v_{1},v_{2},v\in\mathbb{R}^{d}$, the
linear combination of $D_{1}(\text{\textbf{1}}_{n}\otimes v_{1})$
and $D_{2}(\text{\textbf{1}}_{n}\otimes v_{2})$ do not yield $D(\text{\textbf{1}}_{n}\otimes v)$
whether or not $v_{1}$ and $v_{2}$ are linearly independent, where
$D$ is a gauge transformation, $D_{1},D_{2}$ are gauge transformations
with $D_{1}\neq D_{2}$ and $D_{1}\neq-D_{2}$.
\end{lem}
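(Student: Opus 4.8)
The plan is to prove Lemma~\ref{lem:4} by a block-wise contradiction argument. Suppose, for contradiction, that there exist scalars $\alpha,\beta\in\mathbb{R}$, not both zero, such that
\[
\alpha D_{1}(\mathbf{1}_{n}\otimes v_{1})+\beta D_{2}(\mathbf{1}_{n}\otimes v_{2})=D(\mathbf{1}_{n}\otimes v).
\]
Write $D_{1}=\mathbf{blkdiag}\{\sigma^{1}_{1}I_{d},\ldots,\sigma^{1}_{n}I_{d}\}$, $D_{2}=\mathbf{blkdiag}\{\sigma^{2}_{1}I_{d},\ldots,\sigma^{2}_{n}I_{d}\}$ and $D=\mathbf{blkdiag}\{\sigma_{1}I_{d},\ldots,\sigma_{n}I_{d}\}$ with all $\sigma$'s in $\{+1,-1\}$. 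Reading off the $i$-th $d$-dimensional block of the displayed identity gives, for every $i\in\underline{n}$,
\[
\alpha\,\sigma^{1}_{i}v_{1}+\beta\,\sigma^{2}_{i}v_{2}=\sigma_{i}v. \tag{$\ast$}
\]

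First I would dispose of the trivial cases: if $\alpha=0$ then $\beta\sigma^{2}_{i}v_{2}=\sigma_{i}v$ for all $i$, forcing $v$ parallel to $v_{2}$ and $\sigma_{i}=\pm\sigma^{2}_{i}$ with a single global sign, i.e.\ $D=\pm D_{2}$, contradicting the fact that in the intended application $D$ is a genuinely different gauge transformation (this is exactly the content that makes the statement nonvacuous --- I would phrase the conclusion as ``the combination is not of the form $D(\mathbf{1}_{n}\otimes v)$ for any gauge $D$ other than $\pm D_{1}$ or $\pm D_{2}$,'' matching how the lemma is used). Symmetrically for $\beta=0$. So assume $\alpha\beta\neq0$. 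Since $D_{1}\neq\pm D_{2}$, there exist indices $p,q\in\underline{n}$ with $\sigma^{1}_{p}\sigma^{2}_{p}\neq\sigma^{1}_{q}\sigma^{2}_{q}$; by relabelling we may take $\sigma^{1}_{p}=\sigma^{2}_{p}$ (both $+1$, after possibly flipping the global signs of $D_1,D_2$ which is harmless) and $\sigma^{1}_{q}=-\sigma^{2}_{q}$.

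Now evaluate $(\ast)$ at $i=p$ and $i=q$. At $p$ we get $\alpha v_{1}+\beta v_{2}=\sigma_{p}v$ (after absorbing the $\pm$ into $\alpha,\beta$), and at $q$ we get $\pm\alpha v_{1}\mp\beta v_{2}=\sigma_{q}v$, i.e.\ $\alpha v_{1}-\beta v_{2}=\pm\sigma_{q}v$. Adding and subtracting these two relations yields $2\alpha v_{1}=(\sigma_{p}\pm\sigma_{q})v$ and $2\beta v_{2}=(\sigma_{p}\mp\sigma_{q})v$. Since $\sigma_{p},\sigma_{q}\in\{+1,-1\}$, one of the two coefficients $\sigma_{p}\pm\sigma_{q}$ equals $0$ and the other equals $\pm2$; because $\alpha,\beta,v_{1},v_{2}$ are all nonzero, neither $2\alpha v_{1}$ nor $2\beta v_{2}$ is zero, so neither coefficient can vanish --- contradiction. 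This kills the case $D_{1}\neq\pm D_{2}$ with both scalars nonzero, regardless of whether $v_{1},v_{2}$ are linearly independent, which is precisely the claim.

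\textbf{The main obstacle} I anticipate is bookkeeping of sign conventions: the lemma's hypothesis ``$D_{1}\neq D_{2}$ and $D_{1}\neq -D_{2}$'' must be translated carefully into ``the sign-ratio sequence $(\sigma^{1}_{i}\sigma^{2}_{i})_{i}$ is non-constant,'' and one has to be sure the global-sign normalizations used to simplify $(\ast)$ at $p$ and $q$ don't secretly assume what is to be proved. A secondary subtlety is stating the conclusion in a form that is genuinely true: the literal assertion ``no linear combination equals $D(\mathbf{1}_n\otimes v)$'' is false when the combination is trivial (e.g.\ $\alpha D_1(\mathbf 1_n\otimes v_1)$ with $\beta=0$ is of that form with $D=D_1$), so the proof must make explicit that the excluded targets are gauge transformations $D\notin\{D_{1},-D_{1},D_{2},-D_{2}\}$, which is all that Theorem~\ref{thm:1st} and Theorem~\ref{thm:2nd} actually invoke.
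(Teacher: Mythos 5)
Your proof is correct, and it rests on the same underlying mechanism as the paper's: since $D_{1}\neq\pm D_{2}$, there exist two block indices $p,q$ at which the sign-ratio $\sigma^{1}_{i}\sigma^{2}_{i}$ differs, and comparing the two block equations forces one coefficient to vanish. The difference is in the packaging. The paper splits into two cases: when $v_{1},v_{2}$ are linearly independent it invokes the more general Lemma~\ref{lem:5} (with $r=2$), which needs independence to kill the cross terms for general $r$; when $v_{2}=kv_{1}$ it runs a separate short computation on the scalar coefficients. Your add-and-subtract step, giving $2\alpha v_{1}=(\sigma_{p}\pm\sigma_{q})v$ and $2\beta v_{2}=(\sigma_{p}\mp\sigma_{q})v$ with exactly one of the two scalar factors equal to zero, handles both cases uniformly and never uses linear independence at all, so you get a single self-contained argument where the paper uses two. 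Your side remark is also well taken: the lemma as literally stated is only true for genuine combinations with both coefficients nonzero (otherwise $\alpha D_{1}(\mathbf{1}_{n}\otimes v_{1})$ itself has the forbidden form with $D=D_{1}$); the paper makes the same implicit restriction by assuming $\alpha\neq0,\beta\neq0$ in its proof, and this is all that Theorems~\ref{thm:1st} and~\ref{thm:2nd} require.
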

In the Preliminary section, the gauge transformation is introduced
as a block matrix $D={\bf blkdiag}\{I_{d},-I_{d}\}$ for which the
identity blocks $I_{d},-I_{d}$ are sequenced on the diagonal with
a certain pattern of signs. Since the gauge matrix $D$ has $N$ $Nd$-by-$d$
column blocks, the non-zero diagonal blocks could be considered to
have a one-to-one correspondence with the well-indexed $N$ agents,
reflecting either the relative positivity/negativity of their steady
states, or their bipartition by structure. To exert this correspondence,
consider a matrix-weighted network $\mathcal{G}$ for which a partition
$(\mathcal{V}_{1},\mathcal{V}_{2})$ of the node set $\mathcal{V}$
defines a balancing set $\mathcal{E}^{b}(\mathcal{V}_{1},\mathcal{V}_{2})$.
We map this partition onto a gauge matrix $D$ such that ${\bf blk}_{ii}(D)=I_{d}$
for $i\in\mathcal{V}_{1}$ and ${\bf blk}_{ii}(D)=-I_{d}$ for $i\in\mathcal{V}_{2}$,
suppose the nodes are properly indexed. The node partition is then
fully described by the pattern of the signs of the diagonal blocks,
and we phrase $\mathcal{E}^{b}(\mathcal{V}_{1},\mathcal{V}_{2})$
as a balancing set with division $D$ or of $D$-division to notify
how the nodes are actually partitioned given the edge set $\mathcal{E}^{b}(\mathcal{V}_{1},\mathcal{V}_{2})$.

We are ready to establish how the non-trivial balancing set is related
to the null space of the matrix-valued Laplacians.
\begin{thm}
\label{thm:1st}For a matrix-weighted network $\mathcal{G}$, the
following properties are equivalent:

1) there exists a non-trivial balancing set $\mathcal{E}^{nb}$ in
$\mathcal{G}$ with division $D$, such that ${\bf span}\{\Xi\}\subset{\bf null}(\mathcal{E}^{nb})$,

2) ${\bf span}\{D(\text{\textbf{1}}_{n}\otimes\Xi)\}\subset\text{{\bf null}}(L(\mathcal{G}))$,

\noindent where $D$ is a gauge transformation and $\Xi=[\xi_{1},...,\xi_{r}]$
where $\xi_{i}\in\mathbb{R}^{d},i\in\underline{r},0<r\leqslant d$
are linearly independent.
\end{thm}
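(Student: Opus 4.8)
The plan is to prove the equivalence by translating each side into a statement about the signed incidence matrix $H$ and the block-diagonal weight matrix $\mathbf{blkdiag}\{|A_k|\}$, using the factorization $L = H^T\mathbf{blkdiag}\{|A_k|\}H$ from Lemma~\ref{lem:1}. The core observation is that $L$ is a sum of positive semi-definite terms, so $Lz = 0$ if and only if $|A_k|H^kz = 0$ for every edge $e_k$, i.e. if and only if $H^k z \in \mathbf{null}(|A_k|) = \mathbf{null}(A_k)$ for all $k$. I will work column-by-column on $\Xi$: since the columns $\xi_i$ are linearly independent and $D$ is fixed, $\mathbf{span}\{D(\mathbf{1}_n\otimes\Xi)\}\subset\mathbf{null}(L)$ holds if and only if $D(\mathbf{1}_n\otimes\xi_i)\in\mathbf{null}(L)$ for each $i$, so it suffices to treat a single vector $\xi = \xi_i$.

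For the direction $2)\Rightarrow 1)$: assume $D(\mathbf{1}_n\otimes\xi)\in\mathbf{null}(L)$ for the given gauge transformation $D$ with block pattern determined by the partition $(\mathcal{V}_1,\mathcal{V}_2)$. For an edge $e_k = (i,j)$, the row block $H^k$ applied to $D(\mathbf{1}_n\otimes\xi)$ gives $\sigma_i\xi \pm \sigma_j\xi$, where $\sigma_i,\sigma_j\in\{I_d,-I_d\}$ are the diagonal blocks of $D$, and the sign is $-$ if $A_{ij}$ is positive (semi-)definite and $+$ if negative (semi-)definite (by Definition~\ref{def:incidence}). A short case check shows this combination is $\mathbf{0}$ precisely when $e_k$ is \emph{not} in the balancing set $\mathcal{E}^b(\mathcal{V}_1,\mathcal{V}_2)$, and is $\pm 2\xi$ when $e_k\in\mathcal{E}^b(\mathcal{V}_1,\mathcal{V}_2)$. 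Since $Lz=0$ forces $H^kz\in\mathbf{null}(A_k)$ for every $k$, we get $\xi\in\mathbf{null}(A_k)$ for every $e_k\in\mathcal{E}^b(\mathcal{V}_1,\mathcal{V}_2)$; hence $\xi\in\bigcap_{e\in\mathcal{E}^b}\mathbf{null}(\mathcal{W}(e))\neq\{\mathbf{0}\}$, so this balancing set is in fact non-trivial, i.e. $\mathcal{E}^{nb} = \mathcal{E}^b(\mathcal{V}_1,\mathcal{V}_2)$ with division $D$, and $\mathbf{span}\{\Xi\}\subset\mathbf{null}(\mathcal{E}^{nb})$ follows by running the argument for every column $\xi_i$. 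The converse $1)\Rightarrow 2)$ reverses this: given $\mathcal{E}^{nb}$ with division $D$ and $\mathbf{span}\{\Xi\}\subset\mathbf{null}(\mathcal{E}^{nb})$, for each edge $e_k$ either $e_k\notin\mathcal{E}^b$, in which case $H^k D(\mathbf{1}_n\otimes\xi_i) = \mathbf{0}$ and the term vanishes, or $e_k\in\mathcal{E}^{nb}$, in which case $H^k D(\mathbf{1}_n\otimes\xi_i) = \pm 2\xi_i\in\mathbf{null}(A_k)$, so $|A_k|H^k D(\mathbf{1}_n\otimes\xi_i) = \mathbf{0}$ again; summing, $L\,D(\mathbf{1}_n\otimes\xi_i) = \mathbf{0}$ for each $i$, which gives $2)$.

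I expect the main obstacle to be purely bookkeeping rather than conceptual: carefully verifying the sign case analysis that identifies "$H^k D(\mathbf{1}_n\otimes\xi) = \mathbf{0}$" with "$e_k\notin\mathcal{E}^b(\mathcal{V}_1,\mathcal{V}_2)$," across the four combinations of $\sigma_i\sigma_j\in\{+,-\}$ and $\mathbf{sgn}(A_{ij})\in\{+,-\}$, and making sure the orientation of edges within $H^k$ does not affect the conclusion (it only flips an overall sign of $\pm 2\xi$, which is irrelevant since we only care about membership in a null space). One should also note explicitly that the factorization $L = H^T\mathbf{blkdiag}\{|A_k|\}H$ lets us conclude $\mathbf{null}(L) = \{z : H^k z\in\mathbf{null}(A_k)\ \forall k\}$ because each $|A_k|\succeq 0$, so $z^TLz = \sum_k (H^kz)^T|A_k|(H^kz)$ is a sum of nonnegative terms and vanishes iff each term does; this is the one place semi-definiteness (as opposed to definiteness) of the weights is used, and it is exactly what makes the null-space of the balancing-set weights enter the picture.
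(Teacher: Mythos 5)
Your proposal is correct and follows essentially the same route as the paper: both reduce $Lz=\mathbf{0}$ to the per-edge condition $A_{ij}(x_i-\mathbf{sgn}(A_{ij})x_j)=\mathbf{0}$ via the factorization $L=H^{T}\mathbf{blkdiag}\{|A_k|\}H$ and positive semi-definiteness of the $|A_k|$, and then observe that $H^{k}D(\mathbf{1}_n\otimes\xi)$ vanishes exactly on edges outside the balancing set and equals $\pm2\xi$ on edges inside it. The only cosmetic difference is that the paper packages the "edges outside the balancing set give zero" half as a separate lemma (Lemma \ref{lem:3}) about the structurally balanced subgraph $H_1$, whereas you carry out the same sign case analysis uniformly for all edges.
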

\begin{proof}
For any matrix-weighted network the graph Laplacian can be expressed
as

\[
L=H^{T}\text{{\bf blkdiag}}\{|A_{k}|\}H,
\]

\noindent where $H$ is the signed incidence matrix and the blocks
in $\text{{\bf blkdiag}}\{|A_{k}|\}$ are ordered the same as their
appearances in $H$. Thus, $Lx={\bf 0}$ if and only if  $\text{{\bf blkdiag}}\{|A_{k}|\}^{\frac{1}{2}}Hx={\bf 0}$,
namely, 
\[
|A_{ij}|^{\frac{1}{2}}(x_{i}-\text{{\bf sgn}}(A_{ij})x_{j})={\bf 0},\forall(i,j)\in\mathcal{E}.
\]
Note that 
\begin{eqnarray*}
(x_{i}-\text{{\bf sgn}}(A_{ij})x_{j})^{T}|A_{ij}|(x_{i}-\text{{\bf sgn}}(A_{ij})x_{j})\\
=|||A_{ij}|^{\frac{1}{2}}(x_{i}-\text{{\bf sgn}}(A_{ij})x_{j})||^{2} & = & {\bf 0},
\end{eqnarray*}
then 
\[
|A_{ij}|^{\frac{1}{2}}(x_{i}-\text{{\bf sgn}}(A_{ij})x_{j})={\bf 0}
\]
if and only if 
\begin{equation}
A_{ij}(x_{i}-\text{{\bf sgn}}(A_{ij})x_{j})={\bf 0},\label{eq:Aij(xi-sgn.xj)}
\end{equation}
which implies

\[
Hx\in\text{{\bf null}}(\text{{\bf blkdiag}}\{A_{k}\})
\]

\noindent where the weight matrices $A_{ij}$ are relabelled as $A_{k},k\in\underline{|\mathcal{E}|}$.

1) \textrightarrow{} 2): Consider when the network has a non-trivial
balancing set $\mathcal{E}^{nb}(\mathcal{V}_{1},\mathcal{V}_{2})$
with division $D$. Without loss of generality, we block the signed
incidence matrix $H$ as $H=[H_{1}^{T}H_{2}^{T}]^{T}$where $H_{2}$
corresponds to edges in $\mathcal{E}^{nb}$ whose weights have intersecting
null space that is non-trivial. We know from Definition \ref{def:NBS}
that the edges in $H_{1}$ constructs a structurally balanced subgraph,
and since $\mathcal{E}^{nb}(\mathcal{V}_{1},\mathcal{V}_{2})$ is
with division $D$, the blocks of $D$ are assigned as ${\bf blk}_{ii}(D)=I_{d}$
for $i\in\mathcal{V}_{1}$ and ${\bf blk}_{ii}(D)=-I_{d}$ for $i\in\mathcal{V}_{2}$.
From Lemma \ref{lem:3} the gauge transformation $D$ satisfies 
\[
{\color{black}{\color{black}{\bf span}\{D({\bf 1}_{n}\otimes I_{d})\}\subset{\bf null}(H_{1}^{T}H_{1})}}
\]
because $H_{1}^{T}H_{1}$ is the Laplacian matrix of a structurally
balanced network with the same topology as the $H_{1}$ subgraph except
the absolute weights are all identity matrices. It is then derived
that ${\bf span}\{H_{1}D(\text{\textbf{1}}_{n}\otimes I_{d})\}=\{{\bf 0}\}$.
Now consider $H_{2}D(\text{\textbf{1}}_{n}\otimes\xi_{p})$ where
\begin{eqnarray*}
\xi_{p}\in{\bf span}\{\xi_{1},...,\xi_{r}\} & \subset & \text{{\bf null}}(\mathcal{E}^{nb})\\
 & = & \bigcap_{A_{j}\in\mathcal{W}(\mathcal{E}^{nb})}\text{{\bf null}}(A_{j}),p\in\underline{r}.
\end{eqnarray*}
For any row block $H_{2}^{j}\in\mathbb{R}^{d\times nd}$ of $H_{2}$,
it is composed of either $\pm I_{d}$ or $\boldsymbol{0}$, and the
corresponding weight matrix $A_{j}$ breaks the structural balance
of the $H_{1}$ subgraph. That means, if $H_{2}^{j}$ has two $I_{d}$
matrices, the corresponding weight matrix is negative (semi-)definite,
and the $H_{1}$ subgraph puts the connected vertices in the same
partition (e.g., both are in $\mathcal{V}_{1}$), therefore $H_{2}^{j}D$
has two $I_{d}$ matrices or two $-I_{d}$ matrices. If $H_{2}^{j}$
has $I_{d}$ and $-I_{d}$, the corresponding weight matrix is positive
(semi-)definite and the $H_{1}$ subgraph puts the connected vertices
in different partitions (one in $\mathcal{V}_{1}$ and the other in
$\mathcal{V}_{2}$), then $H_{2}^{j}D$ has two $I_{d}$ matrices
or two $-I_{d}$ matrices. Thus 
\[
H_{2}D({\bf 1}_{n}\otimes\xi_{p})=2[\pm\xi_{p}^{T},...,\pm\xi_{p}^{T}]^{T}\in\mathbb{R}^{|\mathcal{E}^{nb}|d\times1}
\]
and since $\text{{\bf span}}\{{\bf 1}_{n}\otimes\xi_{p}\}\subset\text{{\bf span}}\{\text{\textbf{1}}_{n}\otimes I_{d}\}$,

\begin{align}
HD(\text{\textbf{1}}_{n}\otimes\xi_{p}) & =\left[\begin{array}{c}
H_{1}D({\bf 1}_{n}\otimes\xi_{p})\\
H_{2}D({\bf 1}_{n}\otimes\xi_{p})
\end{array}\right]\nonumber \\
 & =\left[\begin{array}{c}
{\bf 0}_{(|\mathcal{E}|-|\mathcal{E}^{nb}|)d}\\
\pm2\xi_{p}\\
\vdots\\
\pm2\xi_{p}
\end{array}\right].\label{eq:(4)}
\end{align}

\noindent Block the matrix $A$ as $A={\bf blkdiag}\{\mathcal{A}_{1},\mathcal{A}_{2}\}$,
where matrix $\mathcal{A}_{1}$ has the weight matrices of the $H_{1}$
subgraph as its diagonal blocks, and matrix $\mathcal{A}_{2}$, the
$H_{2}$ subgraph. Then we have

\noindent 
\[
\text{{\bf blkdiag}}\{\mathcal{A}_{1},\mathcal{A}_{2}\}HD(\text{\textbf{1}}_{n}\otimes\xi_{p})={\bf 0}_{|\mathcal{E}|d}.
\]

\noindent Because the $H_{2}$ subgraph contains edges in $\mathcal{E}^{nb}$
and $\xi_{p}\in\bigcap_{A_{j}\in\mathcal{W}(\mathcal{E}^{nb})}\text{{\bf null}}(A_{j})$,
therefore $\mathcal{A}_{2}H_{2}D({\bf 1}_{n}\otimes\xi_{p})={\bf 0}$
holds, for $p\in\underline{r}$. Hence we have proved that ${\bf span}\{D({\bf 1}_{n}\otimes\Xi)\}\subset{\bf null}(L)$.

2) \textrightarrow{} 1): Consider when the Laplacian has $D(\text{\textbf{1}}_{n}\otimes\xi_{p})\subset\text{{\bf null}}(L)$
for $p\in\underline{r}$, which means to assign the nodes with $+\xi_{p}$
or $-\xi_{p}$ according to the sign pattern of $D$ satisfies $A_{ij}(x_{i}-{\bf sgn}(A_{ij})x_{j})={\bf 0},\forall(i,j)\in\mathcal{E}$.
The sign pattern of $D$ corresponds to a partition $(\mathcal{V}_{1},\mathcal{V}_{2})$
which defines a balancing set $\mathcal{E}^{b}(\mathcal{V}_{1},\mathcal{V}_{2})$
on the graph. For any edge $e_{lm}\in\mathcal{E}^{b}(\mathcal{V}_{1},\mathcal{V}_{2})$,
it either a) connects within $\mathcal{V}_{1}$ or $\mathcal{V}_{2}$
and has $A_{lm}\prec0(A_{lm}\preceq0)$, or b) connects between $\mathcal{V}_{1}$
and $\mathcal{V}_{2}$ and has $A_{lm}\succ0(A_{lm}\succeq0)$. Note
that those in $\mathcal{V}_{1}$ are assigned with $+\xi_{p}$ by
$D(\text{\textbf{1}}_{n}\otimes\xi_{p})$ while those in $\mathcal{V}_{2}$
are assigned with $-\xi_{p}$. Therefore for a), eqn. (\ref{eq:Aij(xi-sgn.xj)})
gives $A_{lm}(\xi_{p}+\xi_{p})={\bf 0}$ and $\xi_{p}\in{\bf null}(A_{lm})$;
for b), eqn. (\ref{eq:Aij(xi-sgn.xj)}) gives $A_{lm}(\xi_{p}-(-\xi_{p}))={\bf 0}$
and $\xi_{p}\in{\bf null}(A_{lm})$. We now have for all $e_{lm}\in\mathcal{E}^{b}(\mathcal{V}_{1},\mathcal{V}_{2})$,
there is $\xi_{p}\in{\bf null}(A_{lm}),p\in\underline{r}$, thus the
balancing set $\mathcal{E}^{b}(\mathcal{V}_{1},\mathcal{V}_{2})$
with division $D$ has non-trivial intersecting null space, and ${\bf span}\{\Xi\}\subset{\bf null}(\mathcal{E}^{nb}(\mathcal{V}_{1},\mathcal{V}_{2}))$.
\end{proof}
\begin{cor}
\label{cor:1st}For a non-trivial balancing set $\mathcal{E}^{nb}$
in $\mathcal{G}$ with division $D$, it holds that $D({\bf 1}_{n}\otimes{\bf null}(\mathcal{E}^{nb}))\subset{\bf null}(L)$.
\end{cor}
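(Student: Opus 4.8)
The plan is to deduce the corollary directly from the equivalence established in Theorem \ref{thm:1st}, with essentially no new computation. Observe that $\mathbf{null}(\mathcal{E}^{nb})$ is by definition the intersection $\bigcap_{e\in\mathcal{E}^{nb}}\mathbf{null}(\mathcal{W}(e))$, which is a linear subspace of $\mathbb{R}^{d}$; since we are given that $\mathcal{E}^{nb}$ is a genuine non-trivial balancing set, this subspace is nonzero, say of dimension $r$ with $1\le r\le d$. Pick any basis $\xi_{1},\dots,\xi_{r}$ of it and set $\Xi=[\xi_{1},\dots,\xi_{r}]$; then $\xi_{i}\in\mathbb{R}^{d}$ are linearly independent and $\mathbf{span}\{\Xi\}=\mathbf{null}(\mathcal{E}^{nb})$, so in particular $\mathbf{span}\{\Xi\}\subset\mathbf{null}(\mathcal{E}^{nb})$. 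This is precisely statement 1) of Theorem \ref{thm:1st} for the given $\mathcal{E}^{nb}$ and its division $D$.

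By Theorem \ref{thm:1st}, statement 1) implies statement 2), namely $\mathbf{span}\{D(\mathbf{1}_{n}\otimes\Xi)\}\subset\mathbf{null}(L(\mathcal{G}))$. It then remains only to identify this span with the set $D(\mathbf{1}_{n}\otimes\mathbf{null}(\mathcal{E}^{nb}))$ appearing in the statement. Since $v\mapsto D(\mathbf{1}_{n}\otimes v)$ is a linear map from $\mathbb{R}^{d}$ into $\mathbb{R}^{nd}$, it carries the subspace $\mathbf{null}(\mathcal{E}^{nb})=\mathbf{span}\{\xi_{1},\dots,\xi_{r}\}$ onto $\mathbf{span}\{D(\mathbf{1}_{n}\otimes\xi_{1}),\dots,D(\mathbf{1}_{n}\otimes\xi_{r})\}=\mathbf{span}\{D(\mathbf{1}_{n}\otimes\Xi)\}$; that is, $D(\mathbf{1}_{n}\otimes\mathbf{null}(\mathcal{E}^{nb}))=\mathbf{span}\{D(\mathbf{1}_{n}\otimes\Xi)\}$. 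Combining the two displays yields $D(\mathbf{1}_{n}\otimes\mathbf{null}(\mathcal{E}^{nb}))\subset\mathbf{null}(L)$, as claimed. For the degenerate case in which the associated balancing set is empty, the convention $\mathcal{W}(\mathcal{E}^{nb})=\mathbf{0}$ makes $\mathbf{null}(\mathcal{E}^{nb})=\mathbb{R}^{d}$ collapse to the trivial statement already covered by Lemma \ref{lem:3}, so no separate argument is needed there.

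I do not expect any real obstacle here; the only point requiring a line of care is checking that the set-builder notation $D(\mathbf{1}_{n}\otimes\mathbf{null}(\mathcal{E}^{nb}))$ in the statement is meant as the image of the subspace under the linear map $v\mapsto D(\mathbf{1}_{n}\otimes v)$, so that it coincides with $\mathbf{span}\{D(\mathbf{1}_{n}\otimes\Xi)\}$ rather than with some larger or differently-parametrized object — once that is granted, the corollary is an immediate restatement of the implication 1) $\Rightarrow$ 2) of Theorem \ref{thm:1st}.
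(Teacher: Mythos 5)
Your proposal is correct and follows exactly the paper's own route: the paper proves the corollary in one line by invoking the implication 1) $\Rightarrow$ 2) of Theorem \ref{thm:1st} with $\Xi$ taken to be a basis of ${\bf null}(\mathcal{E}^{nb})$, which is precisely what you do (with the additional, harmless care of identifying $D(\mathbf{1}_{n}\otimes{\bf null}(\mathcal{E}^{nb}))$ as the image of the subspace under the linear map $v\mapsto D(\mathbf{1}_{n}\otimes v)$).
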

\begin{proof}
This is a direct inference from the proposition 1) \textrightarrow{}
2) in Theorem \ref{thm:1st} by considering the bases that span ${\bf null}(\mathcal{E}^{nb})$
as $\Xi$.
\end{proof}
Theorem \ref{thm:1st} has illustrated how the existence of a non-trivial
balancing set in the network interchanges with a set of vectors ${\bf span}\{D({\bf 1}_{n}\otimes\Xi)\}$
in the Laplacian null space. Particularly, proposition 2) \textrightarrow{}
1) shows that as long as ${\bf null}(L)$ includes vectors of the
form ${\bf span}\{D({\bf 1}_{n}\otimes\Xi)\}$, they indicate the
existence of a non-trivial balancing set $\mathcal{E}^{nb}(\mathcal{V}_{1},\mathcal{V}_{2})$
in $\mathcal{G}$ whose partition follows the sign pattern of $D$,
and the columns of $\Xi$ are included in the non-trivially intersecting
null space ${\bf null}(\mathcal{E}^{nb})$. The correlation plays
a central role in establishing the fact that when bipartite consensus
is admitted, there is at least one NBS in the matrix-weighted network.
We are now able to derive a necessary condition on the bipartite consensus
for matrix-weighted networks in general.
\begin{thm}
\label{thm:2nd} If the multi-agent system \eqref{eq:overall-dynamics}
admits a bipartite consensus solution with a steady state $\bar{x}\neq{\bf 0}$,
then there exists a unique non-trivial balancing set $\mathcal{E}^{nb}$
in $\mathcal{G}$ such that $\bar{x}_{i}\in{\bf null}(\mathcal{E}^{nb})$
for all $i\in\mathcal{V}$.
\end{thm}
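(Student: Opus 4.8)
The plan is to read the structure of ${\bf null}(L)$ off Lemma~\ref{lem:2}, to place the steady state inside it, to extract the non‑trivial balancing set via the equivalence in Theorem~\ref{thm:1st}, and finally to eliminate any competing node bipartition with Lemma~\ref{lem:4}. First I would note that $L=H^{T}{\bf blkdiag}\{|A_{k}|\}H$ is symmetric and positive semidefinite, so $e^{-Lt}$ converges as $t\to\infty$ to the orthogonal projector onto ${\bf null}(L)$; consequently $\bar{x}\in{\bf null}(L)$. Since bipartite consensus is attained, Lemma~\ref{lem:2} supplies a gauge transformation $D$ and linearly independent vectors $\Psi=[\psi_{1},\dots,\psi_{s}]$, $1\le s\le d$ (here $s\ge 1$ because $\bar{x}\neq{\bf 0}$), with ${\bf null}(L)={\bf span}\{D({\bf 1}_{n}\otimes\Psi)\}$. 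Hence $\bar{x}=D({\bf 1}_{n}\otimes w)$ for some $w\in{\bf span}\{\Psi\}$, $w\neq{\bf 0}$, so that the blocks are $\bar{x}_{i}={\bf blk}_{ii}(D)\,w=\pm w\in{\bf span}\{w\}$ for every $i\in\mathcal{V}$.

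For the existence part I would invoke the implication 2)~$\Rightarrow$~1) of Theorem~\ref{thm:1st} with $\Xi=\Psi$: the inclusion ${\bf span}\{D({\bf 1}_{n}\otimes\Psi)\}\subseteq{\bf null}(L)$ holds (indeed with equality), so there is a non‑trivial balancing set $\mathcal{E}^{nb}$ in $\mathcal{G}$ with division $D$ whose associated weight matrices satisfy ${\bf span}\{\Psi\}\subseteq{\bf null}(\mathcal{E}^{nb})$. Chaining this with the first step gives $\bar{x}_{i}\in{\bf span}\{w\}\subseteq{\bf span}\{\Psi\}\subseteq{\bf null}(\mathcal{E}^{nb})$ for all $i\in\mathcal{V}$, which is exactly the asserted property.

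The remaining, and hardest, step is uniqueness, which I would prove by contradiction. Suppose a second node bipartition, encoded by a gauge transformation $D'$ with $D'\neq D$ and $D'\neq-D$, also induces a non‑trivial balancing set $(\mathcal{E}^{nb})'$, and pick ${\bf 0}\neq v'\in{\bf null}((\mathcal{E}^{nb})')$. By the implication 1)~$\Rightarrow$~2) of Theorem~\ref{thm:1st} (equivalently Corollary~\ref{cor:1st}), $D'({\bf 1}_{n}\otimes v')\in{\bf null}(L)$. Since ${\bf null}(L)$ is a subspace that also contains $D({\bf 1}_{n}\otimes w)$ with $w\neq{\bf 0}$, every combination $\alpha D'({\bf 1}_{n}\otimes v')+\beta D({\bf 1}_{n}\otimes w)$ again lies in ${\bf null}(L)={\bf span}\{D({\bf 1}_{n}\otimes\Psi)\}$ and is therefore of the form $D({\bf 1}_{n}\otimes v)$; choosing $\alpha,\beta$ so that the combination is nonzero --- which is possible because $D'({\bf 1}_{n}\otimes v')$ and $D({\bf 1}_{n}\otimes w)$ cannot be proportional without forcing $D'=\pm D$ --- yields $v\neq{\bf 0}$ and contradicts Lemma~\ref{lem:4}. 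Thus no such $D'$ exists, the admissible bipartition is unique, and $\mathcal{E}^{nb}$ is the unique non‑trivial balancing set. The delicate points I expect to wrestle with are: justifying that $\bar{x}$ is precisely the ${\bf null}(L)$‑projection of $x(0)$; disposing of the degenerate sub‑case in which the chosen linear combination vanishes (which again forces $D'=\pm D$); and keeping the dictionary between a gauge $D$, its diagonal sign pattern, and the induced node bipartition straight when applying Theorem~\ref{thm:1st} and Lemma~\ref{lem:4}.
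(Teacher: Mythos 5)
Your proposal is correct and follows essentially the same route as the paper: Lemma~\ref{lem:2} gives ${\bf null}(L)={\bf span}\{D({\bf 1}_{n}\otimes\Psi)\}$, the implication 2)~$\Rightarrow$~1) of Theorem~\ref{thm:1st} extracts the NBS containing $\bar{x}_{i}$ in its intersecting null space, and uniqueness follows from Corollary~\ref{cor:1st} together with Lemma~\ref{lem:4} and the closure of ${\bf null}(L)$ under linear combinations. The only difference is that the paper additionally proves ${\bf span}\{\Psi\}={\bf null}(\mathcal{E}^{nb})$ via a rank argument, which is not needed for the statement as written, while you more carefully justify $\bar{x}\in{\bf null}(L)$ and the degenerate case in the uniqueness step.
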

\begin{proof}
If bipartite consensus is achieved on the matrix-weighted network
$\mathcal{G}$, then by Lemma \ref{lem:2} there is ${\bf null}(L)=\text{{\bf span}}\{D(\text{\textbf{1}}_{n}\otimes\Psi)\}$
where $\psi_{1},...,\psi_{r}$ are orthogonal vectors, and for the
steady state we have $\bar{x}\in{\bf span}\{D(\text{\textbf{1}}_{n}\otimes\Psi)\}$,
i.e., $\bar{x}_{i}\in{\bf span}\{\Psi\}$. According to Theorem \ref{thm:1st},
${\bf span}\{D(\text{\textbf{1}}_{n}\otimes\Psi)\}\subset{\bf null}(L)$
implies that there exists a non-trivial balancing set $\mathcal{E}^{nb}(\mathcal{V}_{1},\mathcal{V}_{2})$
with division $D$ such that ${\bf span}\{\Psi\}\subset{\bf null}(\mathcal{E}^{nb})$.
We will first show that actually, ${\bf span}\{\Psi\}={\bf null}(\mathcal{E}^{nb})$
by raising the fact that $rank({\bf null}(L))=rank(D({\bf 1}_{n}\otimes\Psi))=rank({\bf 1}_{n}\otimes\Psi)=rank({\bf 1}_{n})\cdot rank(\Psi)=rank(\Psi)$.
If there is $\psi^{*}\in\mathbb{R}^{d}$ for which $\psi^{*}\in{\bf null}(\mathcal{E}^{nb})$
but $\psi^{*}\notin{\bf span}\{\Psi\}$, proposition 1) \textrightarrow{}
2) of Theorem \ref{thm:1st} states that $D({\bf 1}_{n}\otimes\psi^{*})\in{\bf null}(L)$,
then let $\Psi'=[\psi_{1},...,\psi_{r},\psi^{*}]$, there is ${\bf span}\{D({\bf 1}_{n}\otimes\Psi')\}\subset{\bf null}(L)$
and ${\bf null}(L)$ is raised by rank one. Therefore there exists
$\mathcal{E}^{nb}(\mathcal{V}_{1},\mathcal{V}_{2})$ with division
$D$ such that ${\bf null}(L)=D(\text{\textbf{1}}_{n}\otimes{\bf null}(\mathcal{E}^{nb}(\mathcal{V}_{1},\mathcal{V}_{2})))$.

Now suppose there exists another partition $(\mathcal{V}_{1}^{'},\mathcal{V}_{2}^{'})$
with a corresponding non-trivial balancing set $\mathcal{E}^{nb}(\mathcal{V}_{1}^{'},\mathcal{V}_{2}^{'})$,
then by Corollary \ref{cor:1st},
\[
x^{*}\in D^{'}(\text{\textbf{1}}_{n}\otimes\text{{\bf null}}(\mathcal{E}^{nb}(\mathcal{V}_{1}^{'},\mathcal{V}_{2}^{'})))\subset\text{{\bf null}}(L),
\]
while $D\neq D^{'}$ and $D\neq-D^{'}$. Meanwhile Lemma \ref{lem:4}
states that since $\bar{x}$ and $x^{*}$ are composed of distinct
gauge transformations $D$ and $D'$, their linear combination does
not yield any vector of the form $D({\bf 1}_{n}\otimes v)$ for $v\in\mathbb{R}^{d}$,
which implies $k_{1}\bar{x}+k_{2}x^{*}\notin{\bf span}\{D({\bf 1}_{n}\otimes\Xi)\}$.
Therefore one has 
\[
k_{1}\bar{x}+k_{2}x^{*}\notin\text{{\bf null}}(L),
\]
for $k_{1},k_{2}\neq0$, which is a contradiction. Thus $\mathcal{E}^{nb}(\mathcal{V}_{1},\mathcal{V}_{2})$
is the only non-trivial balancing set in the network and satisfies
${\bf null}(\mathcal{E}^{nb})={\bf span}\{\Xi\}$, then the agents
converge to the non-trivial intersecting null space of the unique
NBS and $\bar{x}_{i}\in{\bf null}(\mathcal{E}^{nb})$.
\end{proof}
A non-trivial balancing set is defined by a partition of nodes that
aims at achieving structural balance on itself; Theorem \ref{thm:2nd}
states that when bipartite consensus is admitted on a matrix-weighted
network, one is bound to find a non-trivial balancing set, a set of
edges, as a third party that prevents the structural balance from
happening, and is with a non-trivial intersecting null space. For
any other grouping of agents, one would find their corresponding balancing
set to have null spaces that intersect only trivially.

Looking back on the definition of the non-trivial balancing set, we
see that when the NBS is somehow unique, the bipartition of the agents'
convergence states is mirrored in the particular grouping $(\mathcal{V}_{1},\mathcal{V}_{2})$
of this NBS (which is also encoded in $D$). Even more noteworthy
is the intersecting null space ${\bf null}(\mathcal{E}^{nb})$ that
directly contributes to the Laplacian null space, as is indicated
by ${\bf null}(L)=D(\text{\textbf{1}}_{n}\otimes\text{{\bf null}}(\mathcal{E}^{nb}))$,
which means the agents converge to a linear combination of the vectors
that span ${\bf null}(\mathcal{E}^{nb})$. Based on this impression,
the definition of the NBS is rather a rephrasing of those vectors
of the crucial form $D(\text{\textbf{1}}_{n}\otimes\xi)$ whose role
is immediately twofold: to split the agents into groups and to grant
the convergence state of the network.

\subsection{Networks with A Positive-negative Spanning Tree }

In this subsection, we examine the matrix-weighted network with a
positive-negative spanning tree. The following theorem is derived
with respect to the non-trivial balancing set.
\begin{thm}
\label{thm:3rd}For a matrix-weighted network $\mathcal{G}$ with
a positive-negative spanning tree, under protocol (\ref{eq:protocol}),
we have:

1) bipartite consensus is admitted if and only if it has a unique
non-trivial balancing set;

2) trivial consensus is admitted when no non-trivial balancing set
is present in the graph.
\end{thm}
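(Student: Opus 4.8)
The plan is to build Theorem~\ref{thm:3rd} on top of Theorem~\ref{thm:2nd}, Theorem~\ref{thm:1st}, and the known fact (from \citet{Pan2019,TRINH2018415}) that a positive-negative spanning tree guarantees that the agents always reach \emph{some} steady state, and moreover that $\text{{\bf null}}(L(\mathcal{G}))$ has a particularly rigid form in this case. The key structural observation I would exploit is this: along a positive-negative spanning tree $\mathcal{T}$, every edge is sign-definite (no semi-definite weights), so fixing the state of one node and propagating $x_i = \text{{\bf sgn}}(A_{ij})x_j$ along $\mathcal{T}$ forces each $x_i = \pm x_{\text{root}}$ with signs determined by a gauge transformation $D_{\mathcal{T}}$ built from the path signs in $\mathcal{T}$. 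Hence $\text{{\bf null}}(L(\mathcal{G})) \subseteq {\bf span}\{D_{\mathcal{T}}({\bf 1}_n \otimes \mathbb{R}^d)\}$, i.e.\ $\text{{\bf null}}(L) = {\bf span}\{D_{\mathcal{T}}({\bf 1}_n \otimes V)\}$ for some subspace $V \subseteq \mathbb{R}^d$, where $V = \bigcap_{(i,j)\in\mathcal{E}}\{v : A_{ij}(v - \text{{\bf sgn}}(A_{ij})\cdot(\text{sign along }\mathcal{T})\,v)=0\}$. The point is that the spanning tree pins down $D$ uniquely up to global sign, so there is essentially one candidate gauge.

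For part~1), the ``only if'' direction is immediate from Theorem~\ref{thm:2nd}: a bipartite consensus solution $\bar x \neq {\bf 0}$ yields a unique non-trivial balancing set. For the ``if'' direction, suppose $\mathcal{G}$ has a unique NBS $\mathcal{E}^{nb}(\mathcal{V}_1,\mathcal{V}_2)$ with division $D$ and $\text{{\bf null}}(\mathcal{E}^{nb}) = {\bf span}\{\Xi\} \neq \{{\bf 0}\}$. By Corollary~\ref{cor:1st}, $D({\bf 1}_n\otimes{\bf null}(\mathcal{E}^{nb})) \subset \text{{\bf null}}(L)$, so $\text{{\bf null}}(L)$ is nonzero and contains a vector of the form $D({\bf 1}_n\otimes\xi)$ with $\xi\neq{\bf 0}$; since a positive-negative spanning tree guarantees convergence, the system reaches a steady state $\bar x$ lying in $\text{{\bf null}}(L)$, and I must argue $\bar x$ can be taken of this bipartite form and nonzero. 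Here I would use the tree-rigidity observation: $\text{{\bf null}}(L) = {\bf span}\{D_{\mathcal{T}}({\bf 1}_n\otimes V)\}$, and any nonzero element has the bipartite form dictated by $D_{\mathcal{T}}$; matching this against $D({\bf 1}_n\otimes\xi)\in\text{{\bf null}}(L)$ via Lemma~\ref{lem:4} forces $D=\pm D_{\mathcal{T}}$, so every nonzero trajectory in the null space is genuinely bipartite. Choosing an initial condition with nonzero projection onto $\text{{\bf null}}(L)$ then produces a bipartite consensus solution.

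For part~2), suppose no non-trivial balancing set is present. By the contrapositive of Theorem~\ref{thm:2nd}, the network cannot admit a bipartite consensus with nonzero steady state; but the positive-negative spanning tree still forces convergence to \emph{some} $\bar x \in \text{{\bf null}}(L)$. I would then show $\text{{\bf null}}(L) = \{{\bf 0}\}$ up to the trivial-consensus subspace: using the tree-rigidity form $\text{{\bf null}}(L)={\bf span}\{D_{\mathcal{T}}({\bf 1}_n\otimes V)\}$, a nonzero $V$ would give a nonzero bipartite vector in $\text{{\bf null}}(L)$, and by the $2)\Rightarrow 1)$ direction of Theorem~\ref{thm:1st} this produces an NBS with division $D_{\mathcal{T}}$ --- contradicting the assumption. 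Hence $V=\{{\bf 0}\}$, $\text{{\bf null}}(L)=\{{\bf 0}\}$, and every trajectory converges to ${\bf 0}$, i.e.\ trivial consensus.

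The main obstacle I anticipate is making the ``tree-rigidity'' step fully rigorous: I need to carefully justify that the positive-negative spanning tree both (a) guarantees convergence (this should be quotable from \citet{Pan2019,TRINH2018415}) and (b) constrains $\text{{\bf null}}(L)$ to a single gauge orbit $D_{\mathcal{T}}({\bf 1}_n\otimes V)$ rather than a sum of several such orbits with different gauges --- the latter is where Lemma~\ref{lem:4} does the real work, ruling out mixed combinations, so that uniqueness of the NBS and the single admissible gauge $D_{\mathcal{T}}$ are genuinely the same phenomenon. Once that identification is in hand, parts~1) and~2) are essentially bookkeeping against Theorems~\ref{thm:1st} and~\ref{thm:2nd}.
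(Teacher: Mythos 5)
Your proposal is correct and matches the paper's own proof in all essentials: the paper likewise derives the ``tree-rigidity'' fact that solving $A_{ij}(x_i-{\bf sgn}(A_{ij})x_j)={\bf 0}$ along the definite tree edges confines every null vector of $L$ to a single gauge orbit $D'({\bf 1}_n\otimes w)$, combines this with Corollary \ref{cor:1st} and Theorem \ref{thm:1st} (ruling out any second gauge via uniqueness of the NBS) for sufficiency, and invokes Theorem \ref{thm:2nd} for necessity. The only, harmless, deviation is in part 2), where you reach ${\bf null}(L)=\{{\bf 0}\}$ by contradiction through the $2)\Rightarrow 1)$ direction of Theorem \ref{thm:1st}, whereas the paper argues directly from the edge equations that the balancing-set endpoints' states must vanish; both rest on the same observation, and your more modular version even sidesteps the paper's slightly imprecise claim that a trivial common intersection yields \emph{two} edges with trivially intersecting null spaces.
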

\begin{proof}
First we provide the proof of part 1). 

(Sufficiency) When the network has a unique non-trivial balancing
set $\mathcal{E}^{nb}$ with division $D$, Corollary \ref{cor:1st}
suggests that $D(\text{\textbf{1}}_{n}\otimes\text{{\bf null}}(\mathcal{E}^{nb}))\subset\text{{\bf null}}(L)$.
We proceed to show that in fact, in the presence of the positive-negative
spanning tree, $D({\bf 1}_{n}\otimes\text{{\bf null}}(\mathcal{E}^{nb}))$
span the whole Laplacian null space. We know from the previous proof
that to derive $\text{{\bf null}}(L)$ is to solve for $x$ that satisfies
a series of equations

\noindent 
\begin{equation}
A_{ij}(x_{i}-\text{{\bf sgn}}(A_{ij})x_{j})={\bf 0},\forall(i,j)\in\mathcal{E}.\label{eq:solve x}
\end{equation}

\noindent Because for any two nodes there is a path with edges whose
weight matrices are all positive (negative) definite, solve equation
\ref{eq:solve x} along the path, we can only derive $x_{p}=x_{q}$
or $x_{p}=-x_{q}$ for any pair of nodes $p,q\in\mathcal{V}$. Thus
any solution in the Laplacian null space could be represented as $D'({\bf 1}_{n}\otimes w)$
for some $w\in\mathbb{R}^{d}$ and a gauge matrix $D'$. 

Suppose there exists $x^{*}=D'(1_{n}\otimes w)\in\text{{\bf null}}(L)$
with $D'\neq D$ and $D'\neq-D$, then according to Theorem \ref{thm:1st},
there exists a non-trivial balancing set $\mathcal{E}^{nb*}$ with
division $D'$ in the network which contradicts our premise that $\mathcal{E}^{nb}$
is unique. Now suppose there is $x^{*}=D({\bf 1}_{n}\otimes w)\in\text{{\bf null}}(L)$
with $w\notin\text{{\bf null}}(\mathcal{E}^{nb})$, then with equation
(\ref{eq:(4)}) we have $\text{{\bf blkdiag}}\{A_{k}\}HD(\text{\textbf{1}}_{n}\otimes w)\neq{\bf 0}$,
hence $D(\text{\textbf{1}}_{n}\otimes w)\notin\text{{\bf null}}(L)$
which is clearly a contradiction. Therefore the Laplacian null space
is spanned by $D(\text{\textbf{1}}_{n}\otimes\text{{\bf null}}(\mathcal{E}^{nb}))$,
and bipartite consensus is admitted.

\noindent (Necessity) This is readily verified with Theorem \ref{thm:2nd}.

Now we proceed to prove part 2). We have established that in the presence
of a positive-negative spanning tree, for any $p,q\in\mathcal{V}$
there is $x_{p}=\pm x_{q}$ by solving (\ref{eq:solve x}) along the
positive-negative definite path that connects them. As the tree contains
no circle, the relative positivity and negativity of $x_{p}$ and
$x_{q}$ naturally gives a partition $(\mathcal{V}_{1},\mathcal{V}_{2})$
of the node set which defines a balancing set $\mathcal{E}^{b}(\mathcal{V}_{1},\mathcal{V}_{2})$.
Note that the tree itself is structurally balanced with respect to
this division, thus the edges of the tree are not included in $\mathcal{E}^{b}(\mathcal{V}_{1},\mathcal{V}_{2})$;
in other words, edges in $\mathcal{E}^{b}(\mathcal{V}_{1},\mathcal{V}_{2})$
either has $A_{ij}\prec0(A_{ij}\preceq0)$ and connects within $\mathcal{V}_{1}$
or $\mathcal{V}_{2}$ (where the solution has been $x_{p}=x_{q}$),
or has $A_{ij}\succ0(A_{ij}\succeq0)$ and connects between $\mathcal{V}_{1}$
and $\mathcal{V}_{2}$ (where the solution has been $x_{p}=-x_{q}$).
When no non-trivial balancing set is found in the network, it means
there exist two edges $\mathcal{W}(e_{1})=\mathcal{W}((i_{k_{1}},i_{k_{1}+1}))=A_{1}$
and $\mathcal{W}(e_{2})=\mathcal{W}((i_{k_{2}},i_{k_{2}+1}))=A_{2}$
in the balancing set $\mathcal{E}^{b}(\mathcal{V}_{1},\mathcal{V}_{2})$
that have ${\bf null}(A_{1})\cap{\bf null}(A_{2})=\{{\bf 0}\}$. Both
edges satisfy

\[
\begin{array}{c}
x_{k_{1}}-\text{{\bf sgn}}(A_{1})x_{k_{1}+1}=2x_{k_{1}}\in\text{{\bf null}}(A_{1})\\
x_{k_{2}}-\text{{\bf sgn}}(A_{2})x_{k_{2}+1}=2x_{k_{2}}\in\text{{\bf null}}(A_{2})
\end{array},
\]

\noindent and since ${\bf null}(A_{1})\cap{\bf null}(A_{2})=\{{\bf 0}\}$,
the fact that $x_{k_{1}}=\pm x_{k_{2}}$ gives $x_{k_{1}}=x_{k_{2}}={\bf 0}$,
therefore $x_{i}={\bf 0}$ for $i\in\mathcal{V}$ and a trivial consensus
is admitted on the network.

One fact suggested by Graph Theory is that a network may be spanned
by trees with different choices of edges; one could consider when
there is another positive-negative spanning tree that gives a distinct
partition by solving eqn. (\ref{eq:Aij(xi-sgn.xj)}). Since no non-trivial
balancing set exists in the network, i.e., every balancing set is
trivial for all possible partitions of $\mathcal{V}$, the above reasoning
applies for all positive-negative spanning trees and the conclusion
stands.
\end{proof}
In Theorem \ref{thm:2nd}, the uniqueness of the non-trivial balancing
set is proposed as a necessary condition for the bipartite consensus
on matrix-weighted networks with general structural features. Therefore
in Theorem \ref{thm:3rd}, we confine ourselves to matrix-weighted
networks with positive-negative spanning trees and have found the
uniqueness of the NBS to be both necessary and sufficient, with the
assistance of the intrinsic structural balance of the positive-negative
spanning tree in the sufficient part. We have also established that
to have at least one NBS is quite necessary for such networks to admit
any steady-state other than the trivial consensus.

For the derivation of the theorems so far, we mention that though
some of the discussions in the proofs have touched on the notion of
structural balance, the idea itself is not engaged in the formulation
of the theorems, where the steady-state behaviour is directly associated
with the existence (or non-existence) of the NBS. It is safe to say
the non-trivial balancing set has taken the place of the structural
balance as a proper indication of the system behaviour, and the graph-theoretic
correspondence is partly rebuilt.
\begin{rem}
Considering the necessary and sufficient condition for the bipartite
consensus derived on the scalar-weighted network (\citet{altafini2012consensus}),
which is the structural balance property of the network, we are aware
that this is well incorporated into the framework of Theorem \ref{thm:3rd},
since all weights that are scalar are the $1\times1$ positive/negative
definite matrix weights, and the positive-negative spanning tree naturally
exists. 
\end{rem}

\subsection{A Counter Example and A Sufficient Condition}

It is only natural, at this point, to ask if there is any possibility
for the uniqueness of the NBS to be also conveniently sufficient even
in the absence of a positive-negative spanning tree. However, we have
come to a negative conclusion on this by raising the following counter-example.

\begin{figure}[h]
\begin{centering}
\begin{tikzpicture}[scale=1]
	\definecolor{dodgerblue}{RGB}{0,71,171}
	\definecolor{darkred}{RGB}{230,0,0}
    \definecolor{PT}{RGB}{112,128,144}

	\node (n1) at (0,0) [circle,fill=PT!40,opacity=0.8] {\bf{1}};
    \node (n2) at (0,1.6) [circle,fill=PT!40,opacity=0.8] {\bf{2}};
	\node (n3) at (-1.35,0.8) [circle,fill=PT!40,opacity=0.8] {\bf{3}};
    \node (n4) at (2,0) [circle,fill=PT!40,opacity=0.8] {\bf{4}};
    \node (n5) at (2,1.6) [circle,fill=PT!40,opacity=0.8] {\bf{5}};
    \node (n6) at (3.35,0.8) [circle,fill=PT!40,opacity=0.8] {\bf{6}};
    \node (n7) at (1,-1.2) [circle,fill=PT!40,opacity=0.8] {\bf{7}};

	\node (G_{counter}) at (-2,-0.8) {$\mathcal{G}_{counter}$};


	\draw[-, line width=1.8pt, color=dodgerblue, dashed]  (n2) -- (n3); 
	\draw[-, line width=1.8pt, color=dodgerblue, dashed]  (n2) -- (n5); 
    \draw[-, line width=1.8pt, color=dodgerblue, dashed]  (n4) -- (n6);
    \draw[-, line width=1.8pt, color=darkred!90, dashed]  (n1) -- (n4); 
    \draw[-, line width=1.8pt, color=darkred!90, dashed]  (n4) -- (n7); 
	\draw[-, line width=1.8pt, color=darkred!90]  (n1) -- (n2); 
    \draw[-, line width=1.8pt, color=darkred!90]  (n1) -- (n3); 
    \draw[-, line width=1.8pt, color=darkred!90]  (n4) -- (n5); 
    \draw[-, line width=1.8pt, color=darkred!90]  (n5) -- (n6); 
    \draw[-, line width=1.8pt, color=darkred!90]  (n1) -- (n7);

\end{tikzpicture}
\par\end{centering}
\caption{The matrix-weighted network $\mathcal{G}_{counter}$ for Example 3.
The red solid (resp., dashed) lines denote edges weighted by positive
definite (resp., semi-definite) matrices; the blue solid (resp., dashed)
lines denote edges weighted by negative definite (resp., semi-definite)
matrices.}
\label{fig:Figure4}
\end{figure}
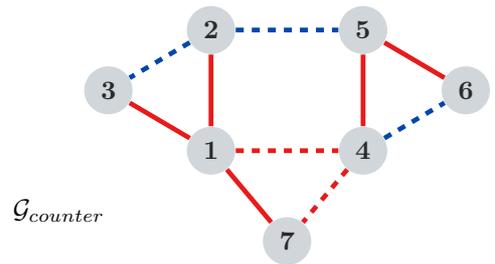

\begin{example}
The matrix-weighted network $\mathcal{G}_{counter}$ is a 7-node structurally
imbalanced network with the unique non-trivial balancing set being
$\mathcal{E}^{nb}(\mathcal{V}_{1},\mathcal{V}_{2})=\{e_{23},e_{25},e_{46}\}$,
given by the illustration in Figure \ref{fig:Figure4} with the following
weight matrix arrangements:

\begin{align*}
A_{23} & =\begin{bmatrix}-2 & 2 & 0\\
2 & -2 & 0\\
0 & 0 & 0
\end{bmatrix}\preceq0, & v_{a}=\begin{bmatrix}1\\
1\\
0
\end{bmatrix}, & v_{b}=\begin{bmatrix}0\\
0\\
1
\end{bmatrix},\\
A_{14} & =\begin{bmatrix}1 & 1 & 0\\
1 & 1 & 0\\
0 & 0 & 0
\end{bmatrix}\succeq0, & v_{c}=\begin{bmatrix}1\\
-1\\
0
\end{bmatrix}, & v_{b}=\begin{bmatrix}0\\
0\\
1
\end{bmatrix},\\
A_{47} & =\begin{bmatrix}2 & -1 & 2\\
-1 & 2 & -1\\
2 & -1 & 2
\end{bmatrix}\succeq0, & v_{d}=\begin{bmatrix}-1\\
0\\
1
\end{bmatrix},\\
A_{12} & =\begin{bmatrix}2 & 0 & 0\\
0 & 1 & 0\\
0 & 0 & 1
\end{bmatrix}\succ0,
\end{align*}
and $A_{23}=A_{25}=A_{46},A_{12}=A_{13}=A_{45}=A_{56}=A_{17}.$ We
have written down the vectors that span the null spaces of the semi-definite
weight matrices. It is seen that the network $\mathcal{G}_{counter}$
consists of four independent circles, three of which are negative
and one is positive. The non-trivial balancing set must enclose edges
that eliminate the negative circles simultaneously without generating
any other one. The existence of the positive circle $\{1,4,7\}$ has
refrained the NBS from including $e_{14}$ as a result, despite that
$e_{23},e_{14},e_{46}$ share the same eigenvector $v_{b}$ for the
zero eigenvalue. $\mathcal{E}^{nb}(\mathcal{V}_{1},\mathcal{V}_{2})=\{e_{23},e_{25},e_{46}\}$
is then unique as a non-trivial balancing set; however, we could see
from Figure \ref{fig:Figure5} that the numerical solution suggests
the network yields a non-trivial consensus solution, rather than a
bipartite consensus solution, for the structurally imbalanced $\mathcal{G}_{counter}$.
Therefore the uniqueness of the NBS alone is not a sufficient condition
in any strict sense for general matrix-weighted networks.

\begin{figure}
\begin{centering}
\includegraphics[viewport=0bp 0bp 786bp 605bp,width=9cm]{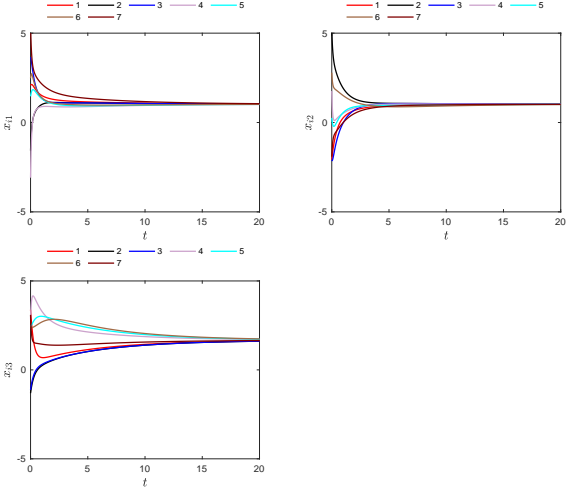}
\par\end{centering}
\caption{Numerical solutions for Example 3. The bipartite consensus is not
admitted despite the existence of a unique NBS.}

\label{fig:Figure5}
\end{figure}
\end{example}
We would like to close the main study of this work with a sufficient
condition on the bipartite consensus, albeit somewhat trivial, under
which a node outside $\mathcal{G}(\mathcal{T})$ can merge with it
through semi-definite paths. For now we denote the network with a
positive-negative spanning tree and a non-trivial balancing set as
$\mathcal{G}^{nb}(\mathcal{T})$. According to Theorem \ref{thm:3rd},
$\mathcal{G}^{nb}(\mathcal{T})$ admits the bipartite consensus, so
that part of the agents converge to $\zeta\in\mathbb{R}^{d}$, of
which we assume the signs to be $\text{{\bf sgn}}(x_{i})=1$, while
other agents would converge to $-\zeta$ and their signs are written
as $\text{{\bf sgn}}(x_{i})=-1$. A node can merge with $\mathcal{G}^{nb}(\mathcal{T})$
if the expanded network obtains bipartite consensus altogether.
\begin{thm}
\label{thm:4th}Consider a matrix-weighted network $\mathcal{G}^{nb}(\mathcal{T})$
with a positive-negative spanning tree and a non-trivial balancing
set, which naturally has all its agents converging either to $\zeta\in\mathbb{R}^{d}$
or $-\zeta\in\mathbb{R}^{d}$ under (\ref{eq:overall-dynamics}).
Suppose a vertex $i_{r}\notin\mathcal{G}^{nb}(\mathcal{T})$ has $m$
paths $\mathcal{P}_{k}=\{(i_{r},i_{r+1}^{k}),...,(i_{|\mathcal{P}_{k}|}^{k},i_{|\mathcal{P}_{k}|+1}^{k})\},k\in\underline{m}$
to reach $\mathcal{G}^{nb}(\mathcal{T})$, each path has only its
last vertex in $\mathcal{G}^{nb}(\mathcal{T})$, which is $i_{|\mathcal{P}_{k}|+1}^{k}\in\mathcal{G}^{nb}(\mathcal{T}),k\in\underline{m}$.
Then $i_{r}$ merge with $\mathcal{G}^{nb}(\mathcal{T})$ if for any
$k_{1},k_{2}\in\underline{m}$, there is $\text{{\bf sgn}}(\mathcal{P}_{k_{1}})\text{{\bf sgn}}(x_{|\mathcal{P}_{k_{1}}|+1}^{k_{1}})=\text{{\bf sgn}}(\mathcal{P}_{k_{2}})\text{{\bf sgn}}(x_{|\mathcal{P}_{k_{2}}|+1}^{k_{2}})$,
and $\bigcap_{k=1}^{m}\text{{\bf null}}(\mathcal{P}_{k})=0$.
\end{thm}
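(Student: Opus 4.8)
The plan is to show that adjoining $i_r$ to $\mathcal{G}^{nb}(\mathcal{T})$ together with the $m$ connecting paths yields a larger network that still admits bipartite consensus, by exhibiting the Laplacian null space of the expanded network explicitly. Let $\mathcal{G}^{+}$ denote the expanded network (the original $\mathcal{G}^{nb}(\mathcal{T})$, the vertex $i_r$, the internal vertices of the paths $\mathcal{P}_k$, and all path edges). I would first recall from Theorem~\ref{thm:3rd} that $\mathcal{G}^{nb}(\mathcal{T})$ has $\text{{\bf null}}(L(\mathcal{G}^{nb}(\mathcal{T})))={\bf span}\{D({\bf 1}\otimes\zeta)\}$ for the appropriate gauge $D$ encoding the $\pm$ signs, and that $\text{{\bf sgn}}(x_{|\mathcal{P}_k|+1}^k)$ is exactly the sign that $D$ assigns to the attachment vertex. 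Then, as in the proof of Theorem~\ref{thm:1st}, solving $Lx={\bf 0}$ on $\mathcal{G}^{+}$ is equivalent to the edge-wise equations $A_{ij}(x_i-\text{{\bf sgn}}(A_{ij})x_j)={\bf 0}$ for all edges, so the restriction of any null vector to $\mathcal{G}^{nb}(\mathcal{T})$ must already lie in ${\bf span}\{D({\bf 1}\otimes\zeta)\}$ (the positive-negative spanning tree forces $x_p=\pm x_q$ there, and uniqueness of the NBS rules out any other gauge).

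The core of the argument is then local, carried out along each path. Fix a candidate steady state whose value on $\mathcal{G}^{nb}(\mathcal{T})$ is $c\,\zeta$ for a scalar $c$, so the attachment vertex $i_{|\mathcal{P}_k|+1}^k$ carries $\text{{\bf sgn}}(x_{|\mathcal{P}_k|+1}^k)\,c\,\zeta$. Propagating the edge equation backwards along $\mathcal{P}_k$: on a definite edge one gets $x_{i_l^k}=\text{{\bf sgn}}(A)\,x_{i_{l+1}^k}$ exactly, and on a semidefinite edge the same relation holds modulo a vector in $\text{{\bf null}}$ of that edge's weight, but in all cases the value is forced to lie in $\text{{\bf sgn}}(\text{rest of }\mathcal{P}_k)\cdot(\text{attachment value})+\text{{\bf null}}(\mathcal{P}_k)$. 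Working all the way down to $i_r$, I would show $x_{i_r}\in \text{{\bf sgn}}(\mathcal{P}_k)\text{{\bf sgn}}(x_{|\mathcal{P}_k|+1}^k)\,c\,\zeta + \text{{\bf null}}(\mathcal{P}_k)$ for every $k$. The hypothesis that $\text{{\bf sgn}}(\mathcal{P}_{k_1})\text{{\bf sgn}}(x_{|\mathcal{P}_{k_1}|+1}^{k_1})$ is independent of $k$ makes the ``signed target'' $\sigma\, c\,\zeta$ the same across all paths, so the $m$ constraints read $x_{i_r}-\sigma c\zeta\in\text{{\bf null}}(\mathcal{P}_k)$ for each $k$; intersecting and using $\bigcap_k\text{{\bf null}}(\mathcal{P}_k)=\{{\bf 0}\}$ pins down $x_{i_r}=\sigma c\zeta$ uniquely. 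The same argument then forces every internal path vertex to the corresponding $\pm c\zeta$, so the whole of $\mathcal{G}^{+}$ sits in a single gauge-transformed span ${\bf span}\{D^{+}({\bf 1}\otimes\zeta)\}$, and by Lemma~\ref{lem:2} this is bipartite consensus; since $\zeta\neq{\bf 0}$, it is genuinely bipartite and not trivial. (Connectedness of $\mathcal{G}^{+}$ is clear because $i_r$ was assumed to have at least one path into the connected graph $\mathcal{G}^{nb}(\mathcal{T})$.)

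I expect two steps to require the most care. First, the bookkeeping of signs along a path containing semidefinite edges: one must be careful that $\text{{\bf sgn}}(A_{ij})$ of a semidefinite matrix is still $\pm1$ (per the paper's matrix sign convention), so the propagation relation $x_i=\text{{\bf sgn}}(A_{ij})x_j+(\text{null component})$ is well posed, and that ``null component'' accumulates exactly into $\text{{\bf null}}(\mathcal{P}_k)=\bigcup_l\text{{\bf null}}(A_{i_l^k i_{l+1}^k})$ as defined in the Preliminaries (more precisely into its span). Second, and this is the genuine obstacle, one has to be sure that adding $i_r$ and the internal path vertices does not create a \emph{new} non-trivial balancing set with a different gauge $D^{+}$ — equivalently, that no solution with a ``wrong'' sign pattern on $\mathcal{G}^{nb}(\mathcal{T})$ can sneak in. This is where uniqueness of the NBS on $\mathcal{G}^{nb}(\mathcal{T})$ and the presence of its positive-negative spanning tree are used: any null vector restricts to $\mathcal{G}^{nb}(\mathcal{T})$ as $D({\bf 1}\otimes w)$ for some $w$, and by the argument in the proof of Theorem~\ref{thm:3rd} part~1 one must have $w\in\text{{\bf null}}(\mathcal{E}^{nb})$, so no spurious gauge appears; the path-propagation above then closes the argument without enlarging the null space beyond rank $\dim\text{{\bf null}}(\mathcal{E}^{nb})$.
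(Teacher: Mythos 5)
Your proposal matches the paper's proof in its essential mechanism: propagate the edge equations $A_{ij}(x_i-{\bf sgn}(A_{ij})x_j)={\bf 0}$ backwards along each path to obtain $x_{i_r}-{\bf sgn}(\mathcal{P}_k)x^k_{|\mathcal{P}_k|+1}\in{\bf null}(\mathcal{P}_k)$, use the sign hypothesis to make the target vector common to all $m$ paths, and invoke the trivial intersection of the path null spaces to pin $x_{i_r}$ down uniquely. The paper's own proof is in fact terser than yours --- it writes out only two paths and does not address the two points you flag (the union-versus-sum reading of ${\bf null}(\mathcal{P}_k)$ when a path carries several semi-definite edges, and the exclusion of a spurious gauge pattern on the expanded network), so your extra care is a refinement of, not a departure from, the same argument.
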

\begin{proof}
Suppose $\mathcal{P}_{1},\mathcal{P}_{2}\in\{\mathcal{P}_{k}\}$,
$i_{|\mathcal{P}_{1}|+1}^{1},i_{|\mathcal{P}_{2}|+1}^{2}\in\mathcal{G}^{nb}(\mathcal{T})$,
$x_{|\mathcal{P}_{1}|+1}^{1},x_{|\mathcal{P}_{2}|+1}^{2}$ denote
their final states. Then there is

\[
\begin{array}{c}
x_{r}-\text{{\bf sgn}}(\mathcal{P}_{1})x_{|\mathcal{P}_{1}|+1}^{1}\in\text{{\bf null}}(\mathcal{P}_{1}),\\
x_{r}-\text{{\bf sgn}}(\mathcal{P}_{2})x_{|\mathcal{P}_{2}|+1}^{2}\in\text{{\bf null}}(\mathcal{P}_{2}),
\end{array}
\]

\noindent since $\text{{\bf sgn}}(\mathcal{P}_{1})\text{{\bf sgn}}(x_{|\mathcal{P}_{1}|+1}^{1})=\text{{\bf sgn}}(\mathcal{P}_{2})\text{{\bf sgn}}(x_{|\mathcal{P}_{2}|+1}^{2})$,
the left-hand sides are both about $x_{r}-\text{{\bf sgn}}(\mathcal{P}_{1})\text{{\bf sgn}}(x_{|\mathcal{P}_{1}|+1}^{1})\zeta$.
With $\text{{\bf null}}(\mathcal{P}_{1})\bigcap\text{{\bf null}}(\mathcal{P}_{2})=\{{\bf 0}\}$,
$x_{r}$ has a unique solution $\text{{\bf sgn}}(\mathcal{P}_{1})x_{|\mathcal{P}_{1}|+1}^{1}$
thus is merged with $\mathcal{G}^{nb}(\mathcal{T})$. 
\end{proof}
Theorem \ref{thm:4th} extends our study of the spanning-tree case
in Theorem \ref{thm:3rd} to when there exist positive/negative semi-definite
paths between agents, the conclusion being a sufficient condition.
We also have the inference that on a general matrix-weighted network,
if the subgraphs spanned by positive/negative-definite trees do not
have their separate non-trivial balancing sets in the first place,
then the network as a whole is incapable of achieving bipartite consensus.

\section{Simulation Example}

This section provides numerical examples of the theorems we have derived,
based on the network constructed in Figure \ref{fig:Figure1}. Now
we can see that $\mathcal{G}_{1}$ is structurally imbalanced with
a unique non-trivial balancing set $\{e_{23}\}$, which yields a structurally
balanced node partition $\{1,5\},\{2,3,4\}$. Under dynamics (\ref{eq:protocol})
the agents admit bipartite consensus as in Figure \ref{fig:Figure2},
and the final states are determined by the intersecting null space
of the NBS ${\bf span}\{\begin{bmatrix}1 & 1 & 0\end{bmatrix}^{T}\}$.

Now suppose the edge weight $A_{34}$ is also semi-definite, and set
$A_{34}=\begin{bmatrix}1 & 0 & 1\\
0 & 3 & 0\\
1 & 0 & 1
\end{bmatrix}$. Then $\mathcal{G}_{1}$ has two non-trivial balancing sets that
give different node partitions. While $\{e_{23}\}$ still partition
the agents into $\{1,5\},\{2,3,4\}$, $\{e_{34}\}$ produces partition
$\{1,5,3\},\{2,4\}$. Figure \ref{fig:Figure6} shows that bipartite
consensus is not achieved under this circumstance.

\begin{figure}
\begin{centering}
\includegraphics[viewport=0bp 0bp 420bp 315bp,width=9cm]{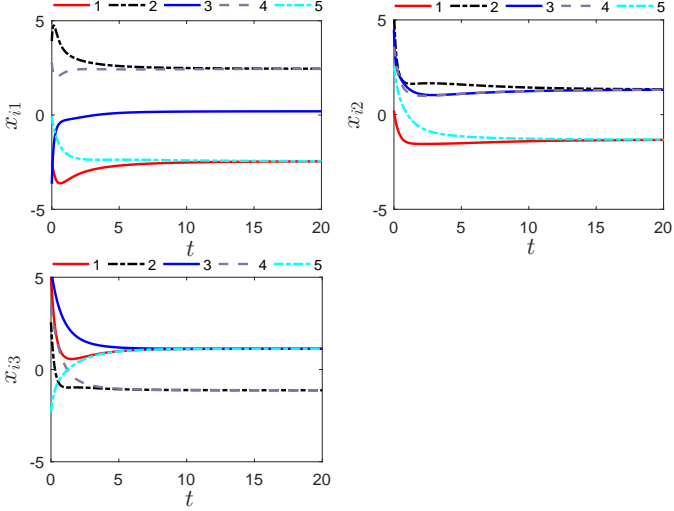}
\par\end{centering}
\caption{Agents do not admit bipartite consensus after the alteration of $A_{34}$,
due to the non-uniqueness of the non-trivial balancing sets.}
\label{fig:Figure6}
\end{figure}

One could easily turn Figure \ref{fig:Figure1} into a graph without
any non-trivial balancing set by setting $A_{23}$ negative-definite
as $A_{23}=A_{12}$. In this case $\mathcal{G}_{1}$ has a positive-negative
spanning tree, and as expected, the agents have only admitted a trivial
consensus since there is no NBS in the graph, refer to Figure \ref{fig:Figure7}.

\begin{figure}
\begin{centering}
\includegraphics[viewport=0bp 0bp 420bp 315bp,width=9cm]{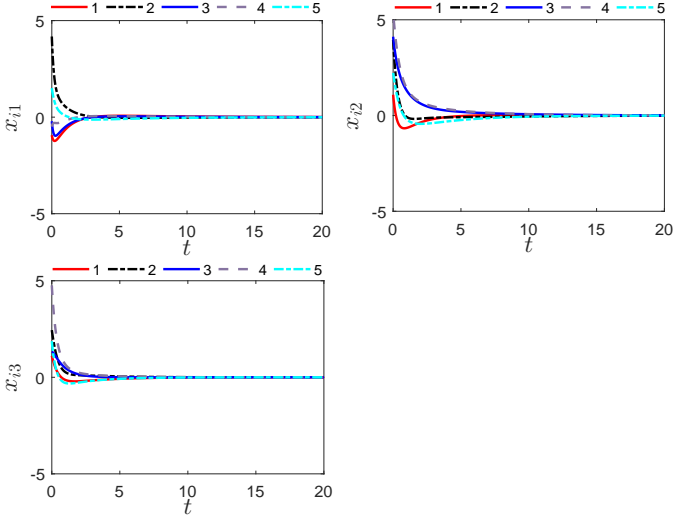}
\par\end{centering}
\caption{Agents admit trivial consensus when $\mathcal{G}_{1}$ has no non-trivial
balancing set.}
\label{fig:Figure7}
\end{figure}

\section{Concluding Remarks}

In this paper, we have established the significance of the non-trivial
balancing set to the bipartite consensus of matrix-weighted networks.
It is shown that the uniqueness of such a set is a necessary condition
in admitting the bipartite consensus. Moreover, if bipartite consensus
is indeed achieved, the final states of the agents are determined
by none other than the intersecting null space of the non-trivial
balancing set. The uniqueness of the NBS is specifically studied on
networks with positive-negative spanning trees, which turns out to
be both necessary and sufficient for the bipartite consensus. Based
on this conclusion, we have given the condition to extend the tree
with semi-definite matrix-weighted paths while preserving the bipartite
consensus on the resulting network. However, we are aware that this
condition is formulated in an algorithmic fashion that does not involve
much structural attribute of the network; for future research, we
would expect the establishment of a sufficient condition for the bipartite
consensus with the concept of the NBS that is more structure-based
and applicable for specific control problems.

\section*{Appendix}

\noindent 1. Proof for Lemma \ref{lem:3}.
\begin{proof}
If $\mathcal{G}$ is connected and $(\mathcal{V}_{1},\mathcal{V}_{2})-$structurally
balanced, construct a gauge transformation matrix $D$ such that ${\bf blk}_{ii}(D)=I_{d}$
for $i\in\mathcal{V}_{1}$ and ${\bf blk}_{ii}(D)=-I_{d}$ for $i\in\mathcal{V}_{2}$,
let $x\in{\bf span}\{D({\bf 1}_{n}\otimes I_{d})\}$, then $A_{ij}(x_{i}-{\bf sgn}(A_{ij})x_{j})={\bf 0}$
holds for all $(i,j)\in\mathcal{E}$, because when $x_{i}=x_{j}$,
there is $i,j\in\mathcal{V}_{1}$ or $i,j\in\mathcal{V}_{2}$, and
${\bf sgn}(A_{ij})>0$ (${\bf sgn}(A_{ij})\geq0$); when $x_{i}=-x_{j}$,
there is $i\in\mathcal{V}_{1}$, $j\in\mathcal{V}_{2}$ or $i\in\mathcal{V}_{2}$,
$j\in\mathcal{V}_{1}$, and ${\bf sgn}(A_{ij})<0$ (${\bf sgn}(A_{ij})\leq0$).
Therefore we have ${\bf span}\{D({\bf 1}_{n}\otimes I_{d})\}\subset\text{{\bf null}}(L(\mathcal{G}))$. 

If $\mathcal{G}$ is disconnected, note that $\mathcal{G}$ is structurally
balanced if and only if all its components are structurally balanced.
Denote the components of $\mathcal{G}$ as $\mathcal{G}_{i}=(\mathcal{V}_{i},\mathcal{E}_{i},\mathcal{A}_{i})$
where $i\in\underline{q}$ and $|\mathcal{V}_{i}|=n_{i}$. Let $L^{i}$
denote the matrix-valued Laplacian of $\mathcal{G}_{i}$ for all $i\in\underline{q}$.
Then 
\[
L(\mathcal{G})={\bf blkdiag}\{L^{i}\}.
\]
Again there exist $D^{i}\in\mathbb{R}^{n_{i}d\times n_{i}d}$ such
that ${\bf span}\{L^{i}D^{i}({\bf 1}_{n_{i}}\otimes I_{d})\}=\{{\bf 0}\}$
for all $i\in\underline{q}$. Therefore, one can choose $D={\bf blkdiag}\{D^{i}\}$,
then there is ${\bf span}\{D({\bf 1}_{n}\otimes I_{d})\}\subset\text{{\bf null}}(L(\mathcal{G}))$
which completes the proof.
\end{proof}
\begin{lem}
\label{lem:5}For a set of linearly independent vectors $v_{1},...,v_{r}\in\mathbb{R}^{d}$,
$2\leq r\leq d$, with $\forall k_{i}\neq0,i\in\underline{r}$, the
linear combination $x=k_{1}D_{1}(\text{\textbf{1}}_{n}\otimes v_{1})+...+k_{r}D_{r}(\text{\textbf{1}}_{n}\otimes v_{r})\neq D(\text{\textbf{1}}_{n}\otimes v)$
where $v\in\mathbb{R}^{d}$ and $D$ is a gauge transformation, if
the sign patterns of the gauge transformations $D_{1},D_{2},...,D_{r}$
are distinct from each other, that is, there is no $D_{p},D_{q}$
with $D_{p}=D_{q}$ or $D_{p}=-D_{q},p,q\in\underline{r}$.
\end{lem}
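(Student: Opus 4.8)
The plan is to argue by contradiction, exploiting the block-diagonal structure of gauge transformations together with uniqueness of coordinates relative to a basis; the argument is a direct extension of Lemma~\ref{lem:4}. Suppose, contrary to the claim, that
\[
x=k_{1}D_{1}({\bf 1}_{n}\otimes v_{1})+\cdots+k_{r}D_{r}({\bf 1}_{n}\otimes v_{r})=D({\bf 1}_{n}\otimes v)
\]
for some $v\in\mathbb{R}^{d}$ and some gauge transformation $D$. Write $D_{i}={\bf blkdiag}\{s_{1}^{i}I_{d},\ldots,s_{n}^{i}I_{d}\}$ and $D={\bf blkdiag}\{\epsilon_{1}I_{d},\ldots,\epsilon_{n}I_{d}\}$ with all $s_{j}^{i},\epsilon_{j}\in\{-1,1\}$. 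Comparing the $j$-th $d$-dimensional block on both sides gives, for every $j\in\underline{n}$,
\[
\sum_{i=1}^{r}k_{i}s_{j}^{i}v_{i}=\epsilon_{j}v.
\]
Since $\epsilon_{j}\neq0$, the right-hand side forces $v\in{\bf span}\{v_{1},\ldots,v_{r}\}$, say $v=\sum_{i=1}^{r}c_{i}v_{i}$, and since $v_{1},\ldots,v_{r}$ are linearly independent, matching coordinates blockwise yields $k_{i}s_{j}^{i}=\epsilon_{j}c_{i}$ for all $i\in\underline{r}$ and $j\in\underline{n}$.

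Next I would extract the sign information. Fixing any $j$ and using $k_{i}\neq0$ together with $s_{j}^{i},\epsilon_{j}\in\{-1,1\}$, the identity $k_{i}s_{j}^{i}=\epsilon_{j}c_{i}$ forces $|c_{i}|=|k_{i}|\neq0$ and $s_{j}^{i}=\epsilon_{j}t_{i}$, where $t_{i}:=c_{i}/k_{i}\in\{-1,1\}$ is independent of $j$. Hence the sign pattern of $D_{i}$ equals $t_{i}$ times that of $D$, i.e.\ $D_{i}=t_{i}D$. Taking any $p\neq q$ in $\underline{r}$ gives $D_{p}=t_{p}D$ and $D_{q}=t_{q}D$, so either $D_{p}=D_{q}$ (when $t_{p}=t_{q}$) or $D_{p}=-D_{q}$ (when $t_{p}=-t_{q}$); either way this contradicts the hypothesis that $D_{1},\ldots,D_{r}$ are pairwise distinct up to sign. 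This is exactly where $r\geq2$ is needed: for $r=1$ the statement is false, since $k_{1}D_{1}({\bf 1}_{n}\otimes v_{1})=D_{1}({\bf 1}_{n}\otimes k_{1}v_{1})$.

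No single step here is deep, but the one requiring the most care is the coordinate-matching step: one must first observe that $v$ is necessarily contained in ${\bf span}\{v_{1},\ldots,v_{r}\}$ (otherwise the blockwise identities are already impossible) before invoking linear independence to equate coefficients. An alternative route would be induction on $r$ with Lemma~\ref{lem:4} as base case, but the partial sums $k_{1}D_{1}({\bf 1}_{n}\otimes v_{1})+\cdots+k_{\ell}D_{\ell}({\bf 1}_{n}\otimes v_{\ell})$ need not themselves have the form $D({\bf 1}_{n}\otimes v)$, so the inductive hypothesis does not transfer cleanly; the direct coordinate argument above sidesteps this difficulty.
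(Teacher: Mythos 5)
Your proof is correct, and it reaches the conclusion by a genuinely different (and cleaner) mechanism than the paper's. Both arguments work block by block, writing the $j$-th block of $x$ as $\sum_{i=1}^{r}k_{i}\,{\bf sgn}(D_{i}^{j})\,v_{i}$, but they diverge from there. The paper never expands $v$ in the basis $\{v_{1},\ldots,v_{r}\}$: it uses only the single pair $D_{1}\neq\pm D_{r}$ to locate two blocks $x_{l},x_{t}$ on which $D_{1}$ and $D_{r}$ agree and disagree respectively, imposes $x_{l}=\pm x_{t}$, and runs a four-way case analysis in which the difference of the two blocks becomes a nontrivial dependence among the $v_{i}$ (the coefficient of $v_{1}$ or $v_{r}$ being $\pm2k_{1}$ or $\pm2k_{r}\neq0$). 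You instead equate every block to $\epsilon_{j}v$, expand $v=\sum_{i}c_{i}v_{i}$, and match coefficients to obtain $k_{i}s_{j}^{i}=\epsilon_{j}c_{i}$ for all $i,j$, which forces $s_{j}^{i}=\epsilon_{j}t_{i}$ with $t_{i}=c_{i}/k_{i}\in\{-1,1\}$ independent of $j$, i.e.\ $D_{i}=t_{i}D$ for every $i$. This buys a stronger structural conclusion — the representation $x=D({\bf 1}_{n}\otimes v)$ is possible only when \emph{all} the $D_{i}$ collapse to $\pm D$ — with no case analysis, and it makes explicit where $v\in{\bf span}\{v_{1},\ldots,v_{r}\}$ and $c_{i}\neq0$ come from (the latter also disposes of the $v={\bf 0}$ case, which the paper leaves implicit). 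Your closing remark is also apt: induction on $r$ with Lemma \ref{lem:4} as the base case would stall exactly because the partial sums need not have the form $D({\bf 1}_{n}\otimes v)$, which is presumably why the paper proves Lemma \ref{lem:5} directly as well.
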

\begin{proof}
Write $x$ in its block form as $x=\text{{\bf blk}}\{x_{1}^{T}x_{2}^{T}...x_{n}^{T}\}^{T},x_{k}\in\mathbb{R}^{d}$.
Suppose we use $D_{1}(\text{\textbf{1}}_{n}\otimes v_{1}),D_{2}(\text{\textbf{1}}_{n}\otimes v_{2}),...,D_{r}(\text{\textbf{1}}_{n}\otimes v_{r})$
for the linear combination, then 
\begin{equation}
x={\displaystyle \sum_{j=1}^{r}}k_{j}D_{j}(\text{\textbf{1}}_{n}\otimes v_{j})
\end{equation}
The blocks of $x$ are written as

\[
\begin{array}{c}
x_{1}=z_{11}v_{1}+\cdots+z_{1r}v_{r},\\
\vdots\\
x_{l}=z_{l1}v_{1}+\cdots+z_{lr}v_{r},\\
\vdots\\
x_{t}=z_{t1}v_{1}+\cdots+z_{tr}v_{r},\\
\vdots\\
x_{n}=z_{n1}v_{1}+\cdots+z_{nr}v_{r},
\end{array}
\]

\noindent where $|z_{1j}|=|z_{2j}|=...=|z_{nj}|=|k_{j}|$ for $j\in\underline{r}$.
The sign pattern of a gauge transformation is the sequence of signs
of the diagonal blocks, $\{\begin{array}{ccc}
+1 & +1 & -1\end{array}\}$ for $\text{{\bf blkdiag}}\{I_{d},I_{d},-I_{d}\}$ for instance. We
use ${\bf sgn}(D_{j}^{i})$ to denote the sign of the $i$th diagonal
block of gauge transformation $D_{j}$, which can be either $+1$
or $-1$. Then $z_{ij}={\bf sgn}(D_{j}^{i})k_{j}$.

Note that the gauge transformations $D_{1}$ and $D_{r}$ are of different
sign patterns, therefore there exist two blocks of $x$, say, $x_{l}$
and $x_{t}$, so that 
\begin{equation}
{\bf sgn}(D_{1}^{l})={\bf sgn}(D_{r}^{l})\label{eq:6forlem}
\end{equation}
 and 
\begin{equation}
{\bf sgn}(D_{1}^{t})=-{\bf sgn}(D_{r}^{t}).\label{eq:7forlem}
\end{equation}
Suppose $x=D(1_{n}\otimes v)$, then we should have $x_{l}=\pm x_{t}$.
If $x_{l}=x_{t}$ , then (a) suppose ${\bf sgn}(D_{1}^{l})={\bf sgn}(D_{1}^{t})$,
as a consequence of eqn. (\ref{eq:6forlem}) and (\ref{eq:7forlem}),
there is ${\bf sgn}(D_{r}^{l})=-{\bf sgn}(D_{r}^{t})$, i.e., $z_{l1}=z_{t1}$
and $z_{lr}=-z_{tr}$. So when we equate $x_{l}$ and $x_{t}$, there
is
\[
(z_{l2}-z_{t2})v_{2}+...+(z_{l,r-1}-z_{t,r-1})v_{r-1}+2z_{lr}v_{r}=0,
\]
then $v_{2},...,v_{r}$ becomes linearly dependent since there is
at least $z_{lr}\neq0$, thus we have derived a contradiction; (b)
suppose ${\bf sgn}(D_{1}^{l})=-{\bf sgn}(D_{1}^{t})$, then there
is ${\bf sgn}(D_{r}^{l})={\bf sgn}(D_{r}^{t})$, i.e., $z_{l1}=-z_{t1}$
and $z_{lr}=z_{tr}$, so when we equate $x_{l}$ and $x_{t}$, we
have

\[
2z_{l1}v_{1}+(z_{l2}-z_{t2})v_{2}+...+(z_{l,r-1}-z_{t,r-1})v_{r-1}=0,
\]
which contradicts the fact that $v_{1},...,v_{r-1}$ are linearly
independent.

For $x_{l}=-x_{t}$, the contradictions can be derived similarly by
discussing (a) ${\bf sgn}(D_{1}^{l})={\bf sgn}(D_{1}^{t})$ and (b)
${\bf sgn}(D_{1}^{l})=-{\bf sgn}(D_{1}^{t})$.
\end{proof}
\noindent 3. Proof for Lemma \ref{lem:4}.
\begin{proof}
The case of $v_{1}$ and $v_{2}$ being linearly independent has been
proved as the case of $r=2$ in Lemma \ref{lem:5}. When $v_{2}=kv_{1}$,
suppose $x=\alpha D_{1}(\text{\textbf{1}}_{n}\otimes v_{1})+\beta D_{2}(\text{\textbf{1}}_{n}\otimes v_{2})=D(\text{\textbf{1}}_{n}\otimes v),\alpha\neq0,\beta\neq0$,
then the blocks of $x$ are written as $x_{i}=({\bf sgn}(D_{1}^{i})\alpha+{\bf sgn}(D_{2}^{i})\beta)v_{1},i=1,...,n$.
Because we can find $x_{p}$ and $x_{q}$ with ${\bf sgn}(D_{1}^{p})={\bf sgn}(D_{2}^{p})$
and ${\bf sgn}(D_{1}^{q})=-{\bf sgn}(D_{2}^{q})$, there is $x_{p}={\bf sgn}(D_{1}^{p})(\alpha+\beta)v_{1},x_{q}={\bf sgn}(D_{1}^{q})(\alpha-\beta)v_{1}$.
Let $x_{p}=x_{q}$ or $x_{p}=-x_{q}$ we can always derive $\alpha=0$
or $\beta=0$, thus is a contradiction.
\end{proof}
\bibliographystyle{elsarticle-harv}
\bibliography{lib_imbalance,mybib-TV}

\end{document}